%% file: main.tex
\documentclass[letterpaper, 10 pt, conference]{ieeeconf}  % Comment this line out
\makeatletter
\let\NAT@parse\undefined
\makeatother

\IEEEoverridecommandlockouts                              % This command is only
\input{auxFiles/packages}

\title{\LARGE \bf
Physics-Informed Learning of Feedback-Linearizing Representations}

\author{Pavlos Kallinikidis$^{*1}$, Fengjun Yang$^{*1}$, David Snyder$^{1}$, Jacob H. Seidman$^{2}$, \\Nikolai Matni$^{1}$, Paris Perdikaris$^{3}$, and George J. Pappas$^{1}$%
\thanks{$^*$Equal contribution. $^{1}$GRASP Lab, University of Pennsylvania, Philadelphia, PA, USA. $^{2}$Reality Defender Inc. $^{3}$Department of Mechanical Engineering and Applied Mechanics, University of Pennsylvania, Philadelphia, PA, USA. This work was supported in part by DARPA TRS program under contract HR00112590145.}%
}

\begin{document}

\setlength{\belowcaptionskip}{-10pt}

\maketitle
\thispagestyle{empty}
\pagestyle{empty}

%%%%%%%%%%%%%%%%%%%%%%%%%%%%%%%%%%%%%%%%%%%%%%%%%%%%%%%%%%%%%%%%%%%%%%%%%%%%%%%%

\input{sections/00_abstract}
\input{sections/1-introduction}

\input{sections/2-problem-statement}

\input{sections/3-fbl}
\input{sections/4-learning}
\input{sections/5-controller}
\input{sections/6-experiments}

\input{sections/7-conclusion}

% \input{sections/simple_experiments}
% Empty placeholder omitted for arXiv.

% References are important to the reader; therefore, each citation must be complete and correct. If at all possible, references should be commonly available publications.

\bibliographystyle{bibFiles/IEEEbib}
\bibliography{bibFiles/main}

\clearpage
\onecolumn
\appendix
\input{sections/A1-MIMO}
\input{sections/A2-error-bounds}
\input{sections/A3-on-the-diffeomorphism}
\input{sections/A4-synthesis}

\end{document}

%% file: auxFiles/packages.tex
\usepackage{amsmath, amssymb, amsthm} % Standard AMS packages
\usepackage{bbm}                      % Blackboard bold 1 (\mathbbm{1})
\usepackage{cancel}

\usepackage{graphicx}     % Include graphics
\usepackage{epstopdf}     % EPS to PDF conversion
\usepackage{wrapfig}      % Wrap text around figures
\usepackage{float}        % More precise float placement
\usepackage{caption}      % Custom captions
\usepackage{subcaption}   % Subfigures
\usepackage{tikz}         % Drawing figures
\usepackage{pgfplots}     % Plots with TikZ
\usepackage{booktabs}

\usepackage{listings}                                % Code listings
\usepackage[ruled,vlined,linesnumbered]{algorithm2e} % Algorithms

\usepackage{xcolor}           % Colors
\usepackage{textcomp}         % Extra text symbols
\usepackage[utf8]{inputenc} % allow utf-8 input
\usepackage{csquotes}         % Quotation handling
\usepackage[most]{tcolorbox}  % Colored boxes (e.g. theorems, notes)
\usepackage{url}
\usepackage{mathdots}
\usepackage{algorithmic}
\usepackage{fullpage}      % Use full page
\usepackage{etoolbox}      % Programming tools for LaTeX
\let\labelindent\relax
\usepackage{enumitem}      % Customizable lists
\usepackage{makecell}      % Line breaks inside table cells
\usepackage{authblk}       % Author affiliations
\usepackage{balance}       % Balance final page columns
\usepackage{multirow}      % Multi-row table

\usepackage{todonotes}     % Todo notes

\usepackage{auxFiles/symbolDef}
\usepackage{auxFiles/commands}
\input{auxFiles/auxCommands}

\usepackage[T1]{fontenc}    % use 8-bit T1 fonts
\usepackage{booktabs}       % professional-quality tables
\usepackage{nicefrac}       % compact symbols for 1/2, etc.
\usepackage{microtype}      % microtypography
\usepackage[normalem]{ulem}

%% file: auxFiles/auxCommands.tex
\makeatletter
    \@ifpackageloaded{xcolor}{}{\usepackage{xcolor}}
\makeatother

\newcommand{\blue}  [1] {{\color{blue} #1}}

\makeatletter
    \@ifpackageloaded{needspace}{}{\usepackage{needspace}}
\makeatother

\newcommand{\uppercaseGreek}[1]{%
    \begingroup
    \ucmathlist\MakeUppercase{#1}%
    \endgroup
}

\newcommand{\ucmathlist}{%
    \def\alpha{\mathrm{A}}%
    \def\beta{\mathrm{B}}%
    \let\gamma=\Gamma
    \let\delta=\Delta
    \def\epsilon{\mathrm{E}}%
    \def\varepsilon{\mathrm{E}}%
    \def\zeta{\mathrm{Z}}%
    \def\eta{\mathrm{H}}%
    \let\theta=\Theta
    \let\vartheta=\Theta
    \def\iota{\mathrm{I}}%
    \def\kappa{\mathrm{K}}%
    \let\lambda=\Lambda
    \def\mu{\mathrm{M}}%
    \def\nu{\mathrm{N}}%
    \let\xi=\Xi
    \let\pi=\Pi
    \let\varpi=\Pi
    \def\rho{\mathrm{P}}%
    \def\varrho{\mathrm{P}}%
    \let\sigma=\Sigma
    \def\tau{\mathrm{T}}%
    \let\upsilon=\Upsilon
    \let\phi=\Phi
    \let\varphi=\Phi
    \def\chi{\mathrm{X}}%
    \let\psi=\Psi
    \let\omega=\Omega
}

\makeatletter
    \@ifpackageloaded{amsthm}{}{

        \usepackage{amsthm}
    }
\makeatother
\theoremstyle{plain}
    \newtheorem{theorem}{Theorem}
    
    \newtheorem{lemma}[theorem]{Lemma}
    
\theoremstyle{definition}

    \newtheorem{assumption}{Assumption}

\makeatletter
\def\renewtheorem#1{%
    \expandafter\let\csname#1\endcsname\relax
    \expandafter\let\csname c@#1\endcsname\relax
    \gdef\renewtheorem@envname{#1}
    \renewtheorem@secpar
}
\def\renewtheorem@secpar{\@ifnextchar[{\renewtheorem@numberedlike}{\renewtheorem@nonumberedlike}}
\def\renewtheorem@numberedlike[#1]#2{\newtheorem{\renewtheorem@envname}[#1]{#2}}
\def\renewtheorem@nonumberedlike#1{  
    \def\renewtheorem@caption{#1}
    \edef\renewtheorem@nowithin{\noexpand\newtheorem{\renewtheorem@envname}{\renewtheorem@caption}}
    \renewtheorem@thirdpar
}
\def\renewtheorem@thirdpar{\@ifnextchar[{\renewtheorem@within}{\renewtheorem@nowithin}}
\def\renewtheorem@within[#1]{\renewtheorem@nowithin[#1]}
\makeatother

%% file: sections/00_abstract.tex
\begin{abstract}
Feedback linearization is a powerful tool in nonlinear control, but finding the linearizing coordinate transform remains challenging. Feedback linearizing transforms are governed by well-established partial differential equations (PDEs) whose well-posedness is based on the Lie-algebraic Frobenius theorem, but solving the PDEs becomes computationally intractable for large system dimensions. To address this challenge, we propose a cascaded physics-informed neural network (PINN) framework to approximately solve these PDEs. By separately parameterizing terms of different Lie derivative orders, we mitigate the compounding errors inherent in fitting high-order Lie derivatives of neural networks. Furthermore, we use the learned transform to design a tracking controller and establish theoretical bounds on its error with respect to inaccuracies in the learned feedback-linearizing representation. We validate our method on a broad class of feedback linearizable systems, including a multi-input, multi-output planar quadrotor and synthetic examples for which analytical approaches are impractical, and demonstrate that our approach can computationally discover effective feedback-linearizing representations of nonlinear systems for control tasks.

% Feedback linearization is a powerful tool in nonlinear control, but finding the linearizing coordinate transform is analytically complex and commonly requires bespoke analysis for each system considered. Learning-based control methods, by contrast, provide the capacity to adapt to system structure, but purely data-driven methods often ignore the rich geometric structure of the problem. Building on recent advances in algorithms that can compute high-order Lie derivatives efficiently, we present a method for learning feedback-linearizing representations by explicitly leveraging the inherent geometric structure of feedback linearizable systems. Without assuming \textit{a priori} knowledge of the linearizing output, our approach bounds the downstream system regulation error with respect to errors in the learned representation. We validate our method on a broad class of dynamical systems, including a multi-input, multi-output planar quadrotor and several systems for which analytical approaches are impractical; we demonstrate that our approach can recover representations that exactly feedback linearize a nonlinear system. 
\end{abstract}

%% file: sections/1-introduction.tex
 \section{Introduction}
\label{sec:introduction}
Feedback linearization transforms nonlinear dynamics into an equivalent linear system via feedback transformation~\cite{isidori1985nonlinear}. Once found, this transform enables the application of powerful tools from linear systems theory to control complex, nonlinear systems. The existence of such transforms depends on a Lie-algebraic Frobenius condition; when the Frobenius condition holds, there exists a well-posed system of partial differential equations (PDEs) whose solution and associated Lie derivatives provide the linearizing coordinates.
% Although well-known PDE conditions exist to characterize such transformations via a set of \emph{linearizing outputs} \cite{isidori1985nonlinear},
However, solving the PDEs is challenging in general. Analytical approaches are customized and highly system-specific, while algorithmic approaches are scarce and suffer from significant computational shortcomings \citet{Tall2009ExplicitFL}.

Recently, PINNs \cite{raissi2019physics} have emerged as an efficient method to approximate PDE solutions with applications to, for example, optimal control \cite{fotiadis2023physics} and reachability \cite{bansal2021deepreach}. PINNs recast PDE solutions as minimizers of a residual loss. As such, their computational cost empirically scales with solution complexity rather than grid resolution \cite{bansal2021deepreach}, improving performance on higher-dimensional problems.

We leverage the computational benefits of PINNs to systematically discover feedback-linearizing representations of nonlinear systems. Two key technical challenges arise. First, during learning, the high-order Lie derivatives in the governing PDEs introduce computational bottlenecks and compounding approximation errors. To overcome this, we introduce a cascaded neural network structure that
% exploits the geometric structure of feedback linearizable systems and
decomposes complex, high-order PDEs into a sequence of tractable, first-order residual minimizations. We demonstrate that this approach significantly outperforms naive PINN baselines and accurately reconstructs linearizing behavior. Second, controller design must account for the effects of a learned (approximate) representation on downstream control performance. We design a controller using the learned transform and bound its downstream tracking error with respect to learning errors. Evaluated on both physical and synthetic systems up to eight dimensions, our work constitutes the first systematic application of PINNs for discovering feedback linearizing representations.

\subsection{Related Work}
\label{sec:related_work}

\subsubsection{Learning Linear Representations of Nonlinear Systems}
Prior learning-based approaches for feedback linearization focus primarily on compensating for uncertain dynamics when the linearizing outputs are already known, whether by directly approximating the linearizing control law \cite{yesildirek1995feedback, sahin_learning_2016, umlauft2017feedback, westenbroek2020feedback} or modeling the dynamic mismatch from an imperfect linearization \cite{helwa2019provably, greeff_exploiting_2021}. In contrast, we consider the inverse problem of discovering linearizing outputs when the dynamics are perfectly known. Separately, Goswami et al. \cite{goswami_data-driven_2023} aims to discover linearizing transforms via purely data-driven invertible neural networks, which limits downstream control analysis. By contrast, our approach is model-based, solving the governing PDE conditions to discover the desired linear representation while enabling formal tracking error bounds. Other linearizing frameworks, such as Koopman operator approaches \cite{brunton_modern_2022, lusch_deep_2018, pan_physics-informed_2020}, focus largely on system dynamics and do not leverage control-oriented structure like the Frobenius condition.

\subsubsection{Physics-Informed Neural Networks in Control}
By casting PDE solving as optimization over sampled points, PINNs \cite{raissi2019physics} have been successfully applied to control tasks involving Hamilton-Jacobi-Bellman \cite{fotiadis2023physics} and Hamilton-Jacobi-reachability equations \cite{bansal2021deepreach}. However, their application to feedback linearization remains unexplored. A primary barrier is that naively applying PINNs to the high-order PDEs that characterize linearizing transforms introduces significant computational bottlenecks and numerical inaccuracies—challenges we overcome in this work through a novel cascadic training algorithm.

\subsection{Statement of Contributions}
Our contributions are threefold. We present:
\begin{itemize}
    \item A novel, cascaded physics-informed learning method that efficiently approximates linearizing transforms for both single-input single-output (SISO) and multiple-input multiple-output (MIMO) systems;
    \item A realizable controller with learning-aware tracking error bounds that depend explicitly on the quality of the learned coordinate transform, directly linking the learning effectiveness to control performance;  
    \item Extensive empirical validation demonstrating our method's capacity to effectively feedback-linearize a broad class of complex dynamical systems, including a planar quadrotor model and systems for which analytical solutions are intractable.
\end{itemize}

\subsection{Notations}
We use $D_\vcx f$ to denote the derivative of $f$ with respect to $\vcx$. When $f$ and $\vcx$ are both vectors, $D_\vcx f$ denotes the Jacobian. We use the notation $\mathcal{L}_f h(\vcx):= D_\vcx h(\vcx) \cdot f(\vcx)$ to denote the Lie derivative of $h$ along the vector field $f$. We use $\mathrm{ad}_fg:=[f, g] = D_\vcx g \cdot f - D_\vcx f \cdot g$ to denote the Lie bracket between vector fields $f$ and $g$. For convenience, we denote the iterative Lie brackets as $\mathrm{ad}_f^{k+1}g:=[f, \mathrm{ad}_f^{k}g]$ with $\mathrm{ad}_f^0 g = g$.

%% file: sections/2-problem-statement.tex
\section{Problem Formulation} \label{sec:problem-formulation}
Consider the nonlinear, MIMO, control-affine system
\begin{equation}\label{eq:control-affine-sys}
    \dot \vcx = f(\vcx) + g(\vcx)\vcu,
\end{equation}
where state $\vcx \in \R^n$, input $\vcu \in \R^m$, and $f: \R^n \to \R^n, g: \R^n \to \R^{n\times m}$ are smooth vector fields. Given \emph{perfect knowledge} of $f$ and $g$ over an open region $\stD \subseteq \R^n$, our objective is to \emph{learn} a diffeomorphism $\Phi: \stD \to \R^n$ and a nonlinear feedback control law $\vcu = \psi(\vcx, \vcv)$ such that the transformed state $\vcz = \Phi(\vcx)$ follows linear, time-invariant dynamics
\begin{equation}\label{eq:linearized-dynamics}
    \dot\vcz = \mtA \vcz + \mtB \vcv,
\end{equation}
where $(\mtA, \mtB)$ is a canonical Brunovsky pair (i.e. a decoupled chain of integrators).
% \footnote{We note that the choice of Brunovsky canonical form is without loss of generality, as any controllable pair $(\mtA, \mtB)$ can be transformed into this form through a linear and time-invariant change of coordinates.} 
Henceforth, we assume that such a diffeomorphism $\Phi(\vcx)$ exists and refer to it as the \textit{linearizing coordinate transform}.
% Here, $(\mtA, \mtB)$ is in Brunovsky canonical form (i.e., a decoupled chain of integrators)\footnote{We note that the choice of Brunovsky normal form is without loss of generality, as any controllable pair $(\mtA, \mtB)$ can be transformed into this form through a linear and time-invariant change of coordinates.}. We refer to $\Phi(\vcx)$ as the \textit{linearizing coordinate transform}; when implemented with the above control law, the transform is said to \textit{exactly} feedback linearize the dynamics on $\stD$. While we do not know the form of $\Phi(\vcx)$, we assume that the systems we consider are feedback linearizable, i.e., that such a coordinate transform $\Phi(\vcx)$ exists.
\par The paper is organized as follows: PDE conditions for feedback linearization are posed in Section~\ref{sec:FBL}, motivating a physics-informed learning framework developed in Section~\ref{sec:learning}. Section~\ref{sec:controller} bounds the effects of learning error on downstream tracking performance; Section~\ref{sec:experiment} presents extensive experimental validation of our approach.

%% file: sections/3-fbl.tex
\section{Background on Feedback Linearization}\label{sec:FBL}
The theory of feedback linearization yields a set of PDEs whose solution directly determines the linearizing coordinate transform $\Phi$. For clarity, we restrict our exposition to the SISO case ($m=1$);  the MIMO generalization is presented in the Appendix~\ref{appendix:mimo} of the full version \cite{fullversion}.
% Since relevant conditions in the MIMO setting closely resemble those in the SISO setting ($m=1$, where $g(\vcx)$ is a single vector field), we defer the full MIMO treatment to Appendix~\ref{appendix:mimo} and restrict our primary exposition to the SISO setting to maintain notational clarity. 
For SISO systems, existence of a linearizing coordinate transform is equivalent to that of a smooth, real-valued \textit{linearizing output} $h(\vcx)$ with relative degree $n$ on the domain $\stD$ (see Lemma 4.2.1 in \citet{isidori1985nonlinear}). Thus, for all $\vcx \in \stD$, $h(\vcx)$ satisfies:
\begin{enumerate}[label=\textbf{CA\arabic*}, ref=CA\arabic*, leftmargin=*]
    \item \label{cond:C1} $\mathcal{L}_g\mathcal{L}_{f}^i h(\vcx)=0$ for $i\in\{0,\ldots,n-2\}$,
    \item \label{cond:C2} $\mathcal{L}_g\mathcal{L}_{f}^{n-1}h(\vcx)\neq 0$.
\end{enumerate}

The linearizing transform is constructed by direct concatenation of $h(\vcx)$ and its Lie derivatives. Concretely: $\Phi(\vcx) = [\Phi_1(\vcx), \dots, \Phi_n(\vcx)]^\top$, where
\begin{equation}\label{eq:diffeo}
    \Phi_i(\vcx) = \lie^{i-1}_f h(\vcx).
    % \Phi(\vcx) = [h(\vcx), \mathcal{L}_f h(\vcx), \dots, \mathcal{L}_f^{n-1} h(\vcx)]^\top.
\end{equation}
The corresponding linearizing control law is given as
\begin{equation}\label{eq:siso-control-law}
    \vcu = \psi(\vcx, \vcv) = A_\Phi(\vcx)^{-1}(\vcv - b_\Phi(\vcx)),
\end{equation}
where $A_\Phi(\vcx) = {\mathcal{L}_g\Phi_n(\vcx)}$ and $b_\Phi(\vcx) =\lie_f\Phi_n(\vcx)$. \textbf{CA} readily defines a set of PDE conditions on $h(\mathbf{x})$ whose solution fully parameterizes the desired transform $\Phi$ via \eqref{eq:diffeo}. However, learning this solution is challenging due to the presence of high-order Lie derivatives. To reduce the order, the Lie derivatives are recast in terms of iterated Lie brackets to get the following set of equivalent, first-order conditions:
\begin{enumerate}[label=\textbf{CB\arabic*}, ref=CB\arabic*, leftmargin=*]
    \item \label{cond:C3}
    $\lie_{\adfg{i}}h(\vcx)=0$ for $i\in\{0,\ldots,n-2\}$,
    \item \label{cond:C4}
    $\lie_{\adfg{n-1}}h(\vcx) = (-1)^{n-1}\lie_g\lie_f^{n-1}h \neq0$.
\end{enumerate}
% Lie brackets are intrinsically linked to the geometric structure of a dynamical system and provide the fundamental analytical framework for assessing the feasibility of exact feedback linearization.
While \eqref{cond:C3} constitutes a set of $n-1$ equalities
\begin{equation}
D_{\vcx}h\cdot\begin{bmatrix}
g& \adfg{} &\hdots&\adfg{n-2}
\end{bmatrix}=0
\end{equation}
that are well-suited to regression (see Section \ref{sec:learning}), the non-triviality condition \eqref{cond:C4} ensuring the numerical stability of the control law \eqref{eq:siso-control-law} is underconstrained. One approach, consistent with~\eqref{cond:C3}, is to enforce 
\begin{equation}\label{eq:equality-reformulation}
    \lie_{\adfg{n-1}}h(\vcx) = c
\end{equation}
for a nonzero constant $c$ over the domain $\stD$. However, this equality formulation is restrictive; a solution can fail to exist for \eqref{eq:equality-reformulation} even when one exists for \textbf{CB}. In Section~\ref{sec:learning} we propose a more general method to reformulate this condition. Concluding the theoretical background, we note that our method also applies naturally to the MIMO setting, deferring further details to Appendix~\ref{appendix:mimo}.

%% file: sections/4-learning.tex
\section{Learning Linearizing Transformations}\label{sec:learning}
We propose a PINN framework to compute the desired diffeomorphism $\Phi$ by approximating solutions to the linearizing PDE conditions. We first present a direct Lie bracket-based algorithm for learning the linearizing output, and show that constructing the diffeomorphism through high-order differentiation of this output introduces severe compounding errors. We then introduce a novel cascaded neural network structure that sequentially minimizes tractable, first-order residuals to mitigate these errors and enable effective downstream control.
% We propose a PINN framework to compute the desired diffeomorphism $\Phi$ by approximating solutions to the PDE conditons \eqref{cond:C1} and \eqref{cond:C2}. Leveraging their first-order reformulation in \eqref{cond:C3} and \eqref{cond:C4}, we first examine a purely bracket-based single-network approach and show that it suffers from compounding errors when approximating higher-order derivatives. To address this limitation, we introduce a novel cascadic structure of neural networks that sequentially minimizes tractable first-order residuals.

\subsection{Learning Linearizing Outputs}
PINNs \citet{raissi2019physics} compute approximate PDE solutions by parameterizing the candidate function as a deep neural network and minimizing the PDE residuals via a standard loss function. Defining the candidate linearizing output $h_\theta(\vcx)$ with network weights $\theta$, the consequent PDE residuals of \eqref{cond:C3} are compactly expressed as
\begin{equation*}
\mathcal{RB}_\theta(\vcx):=D_\vcx h_\theta\cdot
\begin{bmatrix}
g& \adfg{}&\hdots&\adfg{n-2}
\end{bmatrix},
\end{equation*}
while the non-triviality residual of \eqref{cond:C4} is expressed as
\begin{equation*}
    \mathcal{SB}_\theta(\vcx) :=D_\vcx h_\theta\cdot \adfg{n-1}.
\end{equation*}\par
We appeal to the first-order Lie bracket-based conditions \textbf{CB}, to circumvent the computational complexity introduced by the high order Lie derivatives in \textbf{CA}.
Since $\mathcal{D} \subset \mathbb{R}^n$ is compact and $\mathcal{SB}(\vcx)=D_\vcx h\cdot \adfg{n-1}$ is continuous for any continuously differentiable $h(\vcx)$, we can always pick $a,b>0$ as hyperparameters and find a linearizing output s.t. $\mathcal{SB}(\vcx)\in[a,b]$ for all $\vcx\in\mathcal{D}$, ensuring non-triviality while preserving regularity (e.g., to avoid excessively large control gains).
 We can now train the network $h_\theta(x)$ to approximate such an output using the composite loss
 \begin{equation}\label{eq:loss-1}
     L(\theta)=L_\mathcal{RB}(\theta)+L_\mathcal{SB}(\theta),
 \end{equation}
 with $L_\mathcal{RB}(\theta)=\sum_{i=1}^{N} ||\mathcal{RB}_\theta(x_i)||^2_2$ to enforce the desired bracket annihilation and $L_\mathcal{SB}(\theta)=\sum_{i=1}^{N}(\max(a-\mathcal{SB}_\theta(\vcx_i),0)+\max(\mathcal{SB}_\theta(\vcx_i)-b,0))^2$ to impose the box constraints preventing triviality. 
 \begin{comment}
 \footnote{\color{blue} The linearizing output property is scale-invariant. Hence, we can always find a linearizing $h(\vcx)$ that is tightly lower bounded by $a$. Then, $b$ can simply be picked as a (non-tight) upper bound of the aforementioned function $h$ with the sole purpose of preventing excessive control gains. In practice, both $a$ and $b$ can be selected as hyperparameters without further implications.}
 \end{comment}

% We note that these losses are \emph{consistent} with \eqref{cond:C3} and \eqref{cond:C4} in the sense that any true linearizing output of the system incurs zero loss. However, we observe that a neural network trained under the supervision of \eqref{eq:loss-1} can only approximate a true linearizing output well up to its first derivative; hence, its actual applicability in downstream control is undermined.

\iffalse
We note that these losses are \emph{consistent} for function approximation, in the sense that any true linearizing output of the system will incur zero loss under~\eqref{eq:loss-1}, and thus belongs to the set of minimizers. However, because~\eqref{eq:loss-1} is first-order in $h_{\theta}(\vcx)$, it is ill-suited in practice for the downstream control task, which we elucidate further in the following section. 
\fi
 
\subsection{Difficulty in Constructing $\Phi$ via Differentiation}\label{sec:limits-brackets}
While the loss function \eqref{eq:loss-1} serves as an effective supervision for finding linearizing outputs, constructing the diffeomorphism $\Phi$ requires the complete set of Lie derivatives that appear in \textbf{CA}. However, we observe that a neural network trained under loss \eqref{eq:loss-1} can only approximate a true linearizing output well up to its first derivative~\textendash~the highest order derivative that is explicitly supervised. This limitation is closely related to the deep learning phenomenon of \emph{spectral bias}~\citet{rahaman2019spectralbiasneuralnetworks}, which describes the tendency of neural networks to preferentially approximate low-frequency components of a target function. This bias is detrimental to the approximation of higher order Lie derivatives, since differentiation amplifies errors proportionally to frequency. 

% To effectively employ the approximate linearizing output $h_\theta(\vcx)$ in actual control tasks, one requires access to the complete set of Lie derivatives appearing in~\eqref{eq:diffeo}, which must satisfy \eqref{cond:C1} and \eqref{cond:C2}. Unfortunately, this cannot be ensured by imposing point-wise soft constraints solely on first-order derivatives of the neural network $h_\theta(\vcx)$ via~\eqref{eq:loss-1}. Intuitively, this limitation is closely related to the deep learning phenomenon of \emph{spectral bias}~\citet{rahaman2019spectralbiasneuralnetworks}, which describes the tendency of neural networks to preferentially approximate low-frequency components of a target function. This bias is detrimental to the approximation of higher-order derivatives, since differentiation amplifies errors proportionally to frequency. 

%As such, naively applying standard learning algorithms will result in good function approximation (which uses low-frequency metrics), but poor control performance (which depends on the high-frequency content of $h(\vcx)$).

% In addition to this general interpretation, the degradation in the accuracy of higher-order Lie derivatives can also be attributed to the following lemma.
Additionally, the small first-order bracket errors \eqref{eq:loss-1} in the learned outputs can be amplified via high-order differentiation, as illustrated in the following lemma.
% which separately implies that low-frequency errors will %tend to compound. 
\begin{lemma}
    Let $e_i(\vcx):=\lie_{\adfg{i}}h(\vcx)$, for $\vcx\in \mathcal{D}$, be the residuals of \eqref{cond:C1}. We also define the residuals in \eqref{cond:C3} as $\epsilon_i(\vcx):=\lgfh{i}(\vcx)$. Assuming known $e_i$ for $i\in\{0,\hdots,k\}$, $\epsilon_i$ is computed with the iterative formula:
    \begin{equation*}
        \epsilon_i=(-1)^{i}e_i+\sum_{r=0}^{i-1}(-1)^{r+i+1}\binom{i}{r}\lie_f\epsilon_{r}.
    \end{equation*}
\end{lemma}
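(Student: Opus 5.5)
The plan is to derive a closed-form expansion of the iterated-bracket derivative $e_i=\lie_{\adfg{i}}h$ in terms of the quantities $\epsilon_j=\lgfh{j}$. I would then solve that expansion for its top-order term. The only tool needed is the operator identity $\lie_{[f,X]}=\lie_f\lie_X-\lie_X\lie_f$, valid for any smooth vector field $X$. It follows directly from the definitions in the Notations paragraph: expand $D_\vcx(D_\vcx h\cdot X)\cdot f-D_\vcx(D_\vcx h\cdot f)\cdot X$, and the second-derivative (Hessian) terms cancel by symmetry.

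\textbf{Step 1 (binomial expansion).} By induction on $i$, I would show
\begin{equation*}
e_i=\lie_{\adfg{i}}h=\sum_{j=0}^{i}(-1)^j\binom{i}{j}\lie_f^{\,i-j}\lie_g\lie_f^{\,j}h=\sum_{j=0}^{i}(-1)^j\binom{i}{j}\lie_f^{\,i-j}\epsilon_j .
\end{equation*}
The base case $i=0$ is trivial. For the inductive step, set $X=\adfg{i}$ in the operator identity. This gives $\lie_{\adfg{i+1}}=\lie_f\lie_{\adfg{i}}-\lie_{\adfg{i}}\lie_f$. In operator form, $\lie_{\adfg{i}}=\sum_j(-1)^j\binom{i}{j}\lie_f^{i-j}\lie_g\lie_f^{j}$. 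Applying $\lie_f$ on the left raises $i-j$ by one, and applying $\lie_f$ on the right raises $j$ by one. Reindexing the second sum and using Pascal's rule $\binom{i}{j}+\binom{i}{j-1}=\binom{i+1}{j}$ closes the induction. This is the same combinatorics as the commutator expansion $\mathrm{ad}_A^i B=\sum_j(-1)^j\binom{i}{j}A^{i-j}BA^j$.

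\textbf{Step 2 (inversion).} Isolate the $j=i$ term, which equals $(-1)^i\epsilon_i$. Multiplying by $(-1)^i$ gives
\begin{equation*}
\epsilon_i=(-1)^ie_i+\sum_{r=0}^{i-1}(-1)^{r+i+1}\binom{i}{r}\lie_f^{\,i-r}\epsilon_r .
\end{equation*}
This expresses each $\epsilon_i$ recursively through $e_i$ and the earlier $\epsilon_r$, $r<i$. So once $e_0,\dots,e_k$ are known, all of $\epsilon_0,\dots,\epsilon_k$ follow by forward substitution starting from $\epsilon_0=e_0$.

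\textbf{Main obstacle: the exponent on $\lie_f$ does not match the statement.} The derived recursion carries $\lie_f^{\,i-r}$, whereas the statement prints $\lie_f\epsilon_r$. These agree for $i=1$, giving $\epsilon_1=-e_1+\lie_f\epsilon_0$. They differ for $i=2$: the expansion gives $\epsilon_2=e_2-\lie_f^2\epsilon_0+2\lie_f\epsilon_1$. I would check this case by direct computation to settle which form is correct. I expect it confirms that the intended operator is $\lie_f^{\,i-r}$ and that the single $\lie_f$ in the statement is a typo, and I would prove the statement with that reading.

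This version also matches the point the lemma is meant to make. The lower-order residuals enter $\epsilon_i$ through up to $i$ further Lie derivatives along $f$, so small first-order bracket errors are amplified by repeated differentiation.
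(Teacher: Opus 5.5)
Your proposal is correct and is essentially the paper's argument. The paper just invokes Isidori's Lemma~4.1.2, i.e.\ the expansion $e_i=\sum_{j=0}^{i}(-1)^j\binom{i}{j}\mathcal{L}_f^{\,i-j}\epsilon_j$ that you prove by induction from $\mathcal{L}_{[f,X]}=\mathcal{L}_f\mathcal{L}_X-\mathcal{L}_X\mathcal{L}_f$, and then isolates the $j=i$ term; your reading of the exponent is also right, since the printed single $\mathcal{L}_f$ is a typo for $\mathcal{L}_f^{\,i-r}$ (on $\mathbb{R}$ with $f=g=1$, $h=x^3$ one has $e_2=0$ and $\epsilon_2=6$, whereas the printed formula gives $12-6x$), and the paper's own appendix uses $\mathcal{L}_f^{\,j-k}$ in the same expansion.
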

A direct consequence of Lemma 4.1.2 in \citet{isidori1985nonlinear}, this result formalizes the attenuation of implicit supervision over~\eqref{cond:C1} residuals $e_i$ when training on~\eqref{cond:C3}, as the minimization targets $\epsilon_i$ come to be dominated by the accumulating right-most terms. Thus, the direct approach only weakly penalizes violations of higher-order Lie derivative conditions, compromising the reliability of the learned transform in downstream control tasks.

% highlights how residuals arising from first-order bracket-based constraints propagate and accumulate through successive Lie derivatives, resulting in amplified violations of higher order conditions and ultimately compromising the reliability of the learned transform in downstream control tasks. 

To mitigate this, one could augment the loss in \eqref{eq:loss-1} with explicit supervision of higher order Lie derivatives via the conditions \textbf{CA}. However, incorporating high-order Lie derivatives introduces significant computational and optimization challenges. Although the former can be addressed to some extent by Taylor-mode automatic differentiation \citet{bettencourt2019taylor}, the latter make this approach impractical as system complexity grows (see Section~\ref{sec:experiment}).

\subsection{Cascadic Learning of the Diffeomorphism}\label{sec:cascade}

To overcome the limitations of direct bracket-based supervision, we adopt a cascaded representation using separate neural networks
\begin{equation}
\hat{\Phi}(\vcx):=\begin{pmatrix}h_{\theta_1}(\vcx),\hdots,h_{\theta_n}(\vcx)\end{pmatrix},
\end{equation}
parametrizing the linearizing coordinate transform $\hat{\Phi}(\vcx)=\begin{pmatrix}\hat{\Phi}_1(\vcx), \hdots, \hat{\Phi}_n(\vcx)\end{pmatrix}$. Exploiting the ability of the bracket-based approach to approximate a linearizing output up to its first order, we combine this cascaded structure with a curriculum training strategy. In particular, we first train $h_{\theta_1}(\vcx)$ using \eqref{eq:loss-1}. Then, for $i=2,\dots,n$, we train each $h_{\theta_i}(\vcx)$ in succession by minimizing
\begin{equation}\label{eq:loss-3}
L(\theta_i)=L_{\mathcal{F}}(\theta_i)+L_{\mathcal{G}}(\theta_i),
\end{equation}
where the consistency term
\begin{equation}
L_{\mathcal{F}}(\theta_i)
=
\sum_{j=1}^N
\bigl(
\lie_f h_{\theta_{i-1}}(\vcx_j) - h_{\theta_i}(\vcx_j)
\bigr)^2
\end{equation}
encourages reconstruction of the Lie derivative of the preceding layer $h_{\theta_{i-1}}$, while the control-specific term
\begin{equation}
L_{\mathcal{G}}(\theta_i)
=
\sum_{j=1}^N
\bigl(\lie_g h_{\theta_i}(\vcx_j)\bigr)^2
\end{equation}
promotes adherence to conditions \textbf{CA}.

Intuitively, the cascadic approach decomposes the original learning problem---where a single network must satisfy conditions involving Lie derivatives up to order $n$---into a sequence of simpler tasks involving only first-order derivatives across multiple networks. By relaxing the implicit constraint that $\hat{\Phi}_{i+1}(\vcx)=\lie_f \hat{\Phi}_i(\vcx)$, the resulting coordinate transform satisfies \textbf{CA} more tightly than approaches relying solely on bracket-based supervision. Moreover, this formulation avoids the optimization pathologies associated with explicitly supervising high-order Lie derivatives. Empirically, the cascadic approach demonstrates superior performance and reliability compared to single-network formulations.%, as evidenced by experiments on multiple complex systems. 

\begin{algorithm}[t]
\caption{Cascadic Training of the Linearizing Transform}
\label{alg:cascadic-training}
\begin{algorithmic}[1]

\STATE \textbf{Input:} Training set $\{\vcx_j\}_{j=1}^{N}$

\STATE \textbf{Output:} Learned transform
$\hat{\Phi}(\vcx)
=
\bigl(
h_{\theta_1}(\vcx),\ldots,h_{\theta_n}(\vcx)
\bigr)$

\STATE Initialize $\theta_1,\ldots,\theta_n$

\STATE Train $h_{\theta_1}$ according to~\eqref{eq:loss-1}
\STATE Freeze $h_{\theta_1}$

\FOR{$i=2,\ldots,n$}
    \STATE Train $h_{\theta_i}$ according to~\eqref{eq:loss-3}
    \STATE Freeze $h_{\theta_i}$
\ENDFOR

\STATE \textbf{return}
$\hat{\Phi}(\vcx)
=
\bigl(
h_{\theta_1}(\vcx),\ldots,h_{\theta_n}(\vcx)
\bigr)$

\end{algorithmic}
\end{algorithm}

We conclude with a brief remark on computational complexity. The cascadic approach requires only first-order derivatives for training, significantly reducing the computational cost of constructing high-order derivatives as compared to direct approaches while yielding linear scaling with the number of cascade stages. Further, in MIMO settings only the decoupling matrix requires joint training, and the remaining layers can be trained independently to admit straightforward parallelization.

%% file: sections/5-controller.tex
\section{Tracking Control with Learned Coordinate Transform}\label{sec:controller}
To illustrate how learning errors manifest in downstream control performance, we consider the task of following a dynamically feasible and bounded reference trajectory $(\bar{\vcx}, \bar{\vcu})$ under the nonlinear dynamics \eqref{eq:control-affine-sys}. With the learned, approximately linearizing, transform $\hat{\Phi}$, we design an error-feedback controller in the linearized space. We characterize the tracking error of this controller in terms of learning losses. Before proceeding, we make the following assumptions on the map $\hat{\Phi}$.\footnote{Empirically, we observe no invertibility issues in the examples considered. We further analyze in the Appendix how training errors affect the invertibility of the learned transform relative to the assumed ground-truth linearizing diffeomorphism.}
\begin{assumption}\label{assm:lipschitz-T}
    $\hat{\Phi}$ is invertible. Further, there exists a positive constant $l_{\hat{\Phi}}$ such that $l_{\hat\Phi} \norm{\vcx_1 - \vcx_2} \leq \norm{\hat{\Phi}(\vcx_1) - \hat{\Phi}(\vcx_2)}$ for all $\vcx \in \stD$.
\end{assumption}
The invertibility of $\hat\Phi$ is necessary to relate the tracking error in the linearized domain to that in the nonlinear domain and is a standard assumption for analyzing approximate linearizing transforms \cite{johansen2000computational}. The existence of $l_{\hat\Phi}$ ensures that the inverse is not arbitrarily ill-conditioned.

Let $\bar{\vcz} = \hat{\Phi}(\bar{\vcx})$ be the reference linearized state and $\bar \vcv:= \stL_f \hat{\Phi}_{n-1}(\bar\vcx) + \stL_g \hat{\Phi}_{n-1}(\bar\vcx)\bar \vcu$ as the reference virtual input. We define the control law in the linearized space:
\begin{equation*}
    \vcv(t) = \bar{\vcv}(t) - \mtK(\vcz - \bar{\vcz}(t)),
\end{equation*}
where $\mtK$ is a robust stabilizing gain matrix
% for the Brunovsky pair $(\mtA, \mtB)$
designed by solving the $\stH_\infty$ Riccati equation:
\begin{gather}\label{eq:h-infty-riccati}
    \mtP\mtA + \mtA^\top \mtP - \mtP\mtB\mtR^{-1}\mtB^\top \mtP = -\mtQ - \frac{1}{\gamma^2}\mtP\mtP.
\end{gather}
Here, $(\mtA, \mtB)$ is the Brunovsky pair, and $\mtQ$,  $\mtR$ are properly chosen cost matrices. The gain matrix is then taken to be $\mtK = \mtR^{-1} \mtB \mtP$. Applying the linearizing control law \eqref{eq:siso-control-law}, the control input is given as
\begin{equation}\label{eq:control-law}
    \vcu = A_{\hat\Phi}^{-1}(\vcx)\left(\bar v - \mtK\left(\vcz - \bar \vcz\right) - b_{\hat\Phi}(\vcx)\right).
    % -\frac{\stL_f\hat{\Phi}_{n-1}(\vcx) - (\bar v(t) + K(\hat{\Phi}(\vcx) - \bar \vcz(t)))}{\stL_g\hat{\Phi}_{n-1}(\vcx)}.
\end{equation}
Because of the existence of learning errors, $\hat{\Phi}$ does not perfectly linearize the $\vcz$-space dynamics and leads to the \textit{perturbed} $\vcz$-dynamics
\begin{equation}\label{eq:perturbed-z}
\dot{\vcz} = \mtA \vcz + (\mtB + \Delta_B(\vcx)) \vcv + \vcd_z(\vcx),
\end{equation}
where the multiplicative input disturbance $\Delta_B(\vcx)$ and the drift $\vcd_z(\vcx)$ arise from $\hat{\Phi}$ not perfectly satisfying conditions \textbf{CA}. Two sources of learning inaccuracy drive these terms: the PDE residuals $r_i(\vcx) := \mathcal{L}_g \hat{\Phi}_i(\vcx)$, and the cascade consistency errors $\vcd_c(\vcx)$, with $i$-th element $\vcd_{c,i}(\vcx) := \mathcal{L}_f \hat{\Phi}_{i-1}(\vcx) - \hat{\Phi}_i(\vcx)$. Denoting the stacked residual ${\vcr(\vcx) := [r_1(\vcx), \cdots, r_{n-2}(\vcx), 0]}$, the perturbation terms can be expressed as
\begin{equation}\label{eq:control-residual-form}
\begin{gathered}
    \Delta_B(\vcx):= \vcr(\vcx)A_{\hat\Phi}^{-1}(\vcx),\\
    \vcd_z(\vcx):= \vcr(\vcx)A_{\hat\Phi}^{-1}(\vcx)b_{\hat\Phi}(\vcx) + \vcd_c(\vcx).
\end{gathered}
\end{equation}
Note that the PDE residuals $\vcr$ and consistency error $\vcd_c$ are explicitly minimized during training via corresponding loss terms in \eqref{eq:loss-3}. Further, given that $A_{\hat\Phi}^{-1}(\vcx)$ is invertible (promoted via the term $L_\stS$ in \eqref{eq:loss-1}), $\norm{\Delta_B(\vcx)}_F$ and $\norm{\vcd_z(\vcx)}_2$ vanish as $\vcr$ and $\vcd_c$ approach zero. Thus, when the learning error is small, it is reasonable to assume that the disturbance terms \eqref{eq:control-residual-form} are small and bounded over the training domain. 
\begin{assumption}\label{assm:bounded-residual}
    We assume that for all $\vcx \in \stD$, the disturbance terms $\norm{\Delta_B(\vcx)}_F \leq \bar\delta,\; \norm{\vcd_z(\vcx)}_2 \leq \bar d.$
\end{assumption}
Finally, to ensure safety, the reference trajectory must maintain a sufficient margin strictly within the domain where our learning guarantees hold.
\begin{assumption}\label{assm:robust-reference-main-text}
The reference virtual input is bounded by $\|\bar{\vcv}(t)\| \leq \bar{V}$, and the reference states $\vcx(t)$ remain strictly inside the interior of the training domain $ \mathcal{D}$.
\end{assumption}

In the interest of space, we present an informal version of the result, deferring the formal statement and proof to Appendix~\ref{appendix:proof} of the full version \cite{fullversion}.
\begin{theorem}[informal]\label{thm:tracking}
    Suppose Assumptions \ref{assm:lipschitz-T}--\ref{assm:robust-reference-main-text} hold and that the input perturbation is sufficiently small such that $\bar{\delta} < \sqrt{\lmin(\mtR)}/\gamma$. Then, there exists a forward-invariant set $\Omega_x=\{\vce\mid\norm{\vce}_2^2\leq r_{\vce}^2\}$ for the state tracking error $\vce = \vcx - \bar{\vcx}$. The radius $r_{\vce}$ is bounded by the additive and multiplicative learning errors and scales as:
    \begin{equation}
        r_\vce^2 = \mathcal{O}\left( \frac{{\bar{d}}^{2} + \bar{\delta}^2 \bar{V}^2}{\lmin(\mtR) - \gamma^2 {\bar \delta}^2} \right).
    \end{equation}
    Consequently, if the initial state satisfies $\vce_x(0) \in \Omega_x$, and the reference trajectory maintains a safety margin from the boundary of $\mathcal{D}$ such that $\bar{\vcx}(t) \in \mathcal{D} \ominus \Omega_x$, the true state $\vcx(t)$ never leaves $\mathcal{D}$, guaranteeing that the local error bounds hold for all $t \geq 0$.
\end{theorem}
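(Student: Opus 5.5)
The plan is a Lyapunov argument in the linearized coordinates, with the result then pulled back to state space through Assumption~\ref{assm:lipschitz-T}.

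\textbf{Step 1: error dynamics in $\vcz$.} Let $\vce_z := \vcz - \hat\Phi(\bar\vcx) = \hat\Phi(\vcx)-\bar\vcz$. Apply the control law \eqref{eq:control-law} to the perturbed dynamics \eqref{eq:perturbed-z}. The reference $\bar\vcz$ satisfies $\dot{\bar\vcz} = \mtA\bar\vcz + \mtB\bar\vcv + \bar\vcw$, where $\bar\vcw$ collects the mismatch along the reference; it is of the same form as $\vcd_z$ and $\Delta_B\bar\vcv$ evaluated at $\bar\vcx$. This gives
\begin{equation*}
\dot\vce_z = (\mtA - \mtB\mtK)\vce_z - \Delta_B(\vcx)\mtK\vce_z + \vcw(t), \qquad \norm{\vcw}_2 \lesssim \bar d + \bar\delta\bar V,
\end{equation*}
using Assumption~\ref{assm:bounded-residual} and Assumption~\ref{assm:robust-reference-main-text}. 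These bounds are valid only while $\vcx(t)\in\stD$, and that caveat is dealt with in Step 3.

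\textbf{Step 2: Lyapunov decrease.} Take $V(\vce_z) = \vce_z^\top\mtP\vce_z$ with $\mtP$ from \eqref{eq:h-infty-riccati}. Using the Riccati identity,
\begin{equation*}
\dot V = -\vce_z^\top(\mtQ + \gamma^{-2}\mtP\mtP + \mtK^\top\mtR\mtK)\vce_z - 2\vce_z^\top\mtP\Delta_B\mtK\vce_z + 2\vce_z^\top\mtP\vcw,
\end{equation*}
where the $\mtK^\top\mtR\mtK$ term comes from $\mtP\mtB\mtR^{-1}\mtB^\top\mtP$.

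I expect the cross term $-2\vce_z^\top\mtP\Delta_B\mtK\vce_z$ to be the main obstacle, because it is quadratic in $\vce_z$ and must be absorbed by the negative terms. I would first try Young's inequality in the form $2a^\top b\le \gamma^{-2}\norm{a}^2+\gamma^2\norm{b}^2$ with $a=\mtP\vce_z$ and $b=\Delta_B\mtK\vce_z$. This yields at most $\gamma^{-2}\vce_z^\top\mtP\mtP\vce_z + \gamma^2\bar\delta^2\norm{\mtK\vce_z}^2$. Since $\vce_z^\top\mtK^\top\mtR\mtK\vce_z\ge\lmin(\mtR)\norm{\mtK\vce_z}^2$, the net coefficient on $\norm{\mtK\vce_z}^2$ becomes $-(\lmin(\mtR)-\gamma^2\bar\delta^2)$. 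This is negative exactly under the hypothesis $\bar\delta<\sqrt{\lmin(\mtR)}/\gamma$. The catch is that this uses up the $\gamma^{-2}\mtP\mtP$ term, leaving only $\mtQ$ to dominate the disturbance term. I would handle $2\vce_z^\top\mtP\vcw$ with a second Young split against $\vce_z^\top\mtQ\vce_z$ and the remaining $(\lmin(\mtR)-\gamma^2\bar\delta^2)$ margin, using that $\mtK$ is nondegenerate along the controllable direction. This should give
\begin{equation*}
\dot V \le -c_1 V + c_2\,\frac{\bar d^2 + \bar\delta^2\bar V^2}{\lmin(\mtR)-\gamma^2\bar\delta^2}.
\end{equation*}
The sublevel set $\{V\le c_2(\cdots)/c_1\}$ is then forward invariant. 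If the split against $\mtQ$ does not directly give the stated denominator, a fallback is to weight the Young inequality by $\lmin(\mtR)-\gamma^2\bar\delta^2$, which puts that quantity in the denominator by construction.

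\textbf{Step 3: map back and keep the state in $\stD$.} By Assumption~\ref{assm:lipschitz-T}, $\norm{\vce}\le l_{\hat\Phi}^{-1}\norm{\vce_z}\le l_{\hat\Phi}^{-1}\lambda_{\min}(\mtP)^{-1/2}\sqrt V$. This gives $r_\vce^2$ at the stated order, with constants depending on $\mtP$ and $l_{\hat\Phi}$, and the forward-invariant set in $\vcz$ maps into $\Omega_x$. To close the circularity, I would use a first-exit-time argument. Suppose $\vcx(t)$ first leaves $\stD$ at time $T$. On $[0,T)$ all bounds hold, so $\vce(t)\in\Omega_x$. Because $\bar\vcx(t)\in\stD\ominus\Omega_x$, this gives $\vcx(t)\in\stD$ with a margin. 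Continuity then prevents reaching $\partial\stD$ at $T$, so $\vcx(t)$ stays in $\stD$ and the bounds hold for all $t\ge0$.
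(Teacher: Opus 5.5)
Your proposal is correct, and Steps~1 and~3 match the paper. Your $\vcw$ is the paper's $\vcd=(\Delta_B(\vcx)-\Delta_B(\bar\vcx))\bar\vcv+\vcd_z(\vcx)-\vcd_z(\bar\vcx)$, and your first-exit argument is a more careful version of the paper's one-line claim that the state stays in $\stD$. The real difference is in Step~2: which of the two extra negative terms from the Riccati identity pays for which perturbation.

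The paper spends $\mtK^\top\mtR\mtK=\mtP\mtB\mtR^{-1}\mtB^\top\mtP$ on the cross term, by completing the square in $\mtR^{-1}\mtB^\top\mtP\vce$. The leftover $(\mtP\vce)^\top\Delta_B\mtR^{-1}\Delta_B^\top\mtP\vce$ is absorbed by $\gamma^{-2}\mtP\mtP$. This leaves a full-rank margin $\eta\norm{\mtP\vce}^2$ with $\eta=\gamma^{-2}-\bar\delta^2/\lmin(\mtR)$. That margin absorbs the disturbance $\bar d\norm{\mtP\vce}$, giving $\bar d^2/(2\eta)$, while $\mtQ$ alone supplies the decay. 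This is where the denominator $\lmin(\mtR)-\gamma^2\bar\delta^2$ comes from.

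You do the reverse. Your Young split consumes all of $\gamma^{-2}\mtP\mtP$ (validly, under the same threshold) and leaves the margin $(\lmin(\mtR)-\gamma^2\bar\delta^2)\norm{\mtK\vce}^2$. That margin cannot absorb the disturbance, and your appeal to nondegeneracy of $\mtK$ along the controllable direction does not work. $\mtK$ is $m\times n$ with an $(n-m)$-dimensional kernel, so $\norm{\mtK\vce}$ controls $\vce^\top\mtP\vcw$ only when $\vcw$ lies in the range of $\mtB$. Here, however, the terms built from $\vcr$ and $\vcd_c$ enter through the non-input rows of the integrator chain.

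In your route the disturbance must therefore be absorbed by $\mtQ$ alone, e.g.\ $\dot V\le-\tfrac12\lmin(\mtQ)\norm{\vce}^2+2\lmax(\mtP)^2\norm{\vcw}^2/\lmin(\mtQ)$. The resulting radius carries a factor of order $\lmax(\mtP)^2/\lmin(\mtQ)$ in place of $1/\eta$, and no $\lmin(\mtR)-\gamma^2\bar\delta^2$ at all. This still proves the informal $\mathcal{O}$-statement, because $\lmin(\mtR)-\gamma^2\bar\delta^2\le\lmin(\mtR)$. So your fallback is legitimate, but only as a deliberate loosening. In exchange, your bound does not blow up as $\bar\delta$ approaches the threshold, whereas the paper's route produces the stated denominator naturally.

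One caveat applies to both your argument and the paper's: the first-exit argument needs the initial $\vcz$-error to lie in the invariant sublevel set of $V$. The condition $\vce_x(0)\in\Omega_x$ alone does not ensure this without an upper Lipschitz bound on $\hat\Phi$.
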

Theorem \ref{thm:tracking} formalizes the relationship between the learning errors and closed-loop control. Driving the PDE residuals $\vcr(x)$ and consistency errors $\vcd_c(\vcx)$ to zero during training directly shrinks $\bar{\delta}$ and $\bar{d}$, thereby tightening the guarantees on downstream tracking performance. In Section~\ref{sec:experiment}, we approximate these worst-case bounds by densely sampling the training domain and taking the maximum observed values of $\delta(\vcx)$ and $d(\vcx)$. Extending PINN generalization analyses such as~\cite{mishra2023estimates} to obtain certified bounds between samples is an important direction of future work.

% By leveraging the robustness of the linear controller, we can show that when the disturbance terms $\Delta_B$ and $\vcd$ are sufficiently small, the tracking error of the resulting controller can be guaranteed when the reference trajectory is sufficiently within the domain $\stD$.
% \begin{assumption}\label{assm:robust-reference-main-text}
%     The reference trajectory $(\bar\vcx, \bar u)$ satisfies that the reference virtual input is bounded
%     $$\norm{\bar v} \leq \bar V,$$
%     and that the reference trajectory lies within a robustified version of the domain $\stD$,
%     $$\vcx(t) \in \stD \ominus \Omega_x,$$
%     where $\ominus$ denotes Minkowski difference and $\Omega_x$ is an error set given as
%     \begin{equation*}
%         \Omega_x := \left\{ \vce_x \mid \norm{\vce_x}^2 \leq \frac{4\kappa(P)R (\bar d + \bar\delta \bar V)^2}{l_T^2 \lmin(Q)\left(1 + \frac{2R}{\gamma^2} - \sqrt{1 + 4\bar\delta^2}\right)} \right\}.
%     \end{equation*}
%     Here $\kappa(\cdot)$ denotes the condition number of the given matrix.
% \end{assumption}

%% file: sections/6-experiments.tex
\section{Experiments}\label{sec:experiment}
We test our learning algorithm and the corresponding learned downstream controllers in an extensive array of feedback linearizable systems. We first test our method on well-known examples that admit analytic linearizing output solutions; these instances have simple (e.g., affine) forms that are easy to learn. Then, to further stress-test our approach, we synthesize complex SISO and MIMO systems for which analytical solutions are intractable. The results show that our approach can achieve good empirical performance for systems of various sizes in terms of both the quality of the linearizing transform and the controller performance.

% note that since the linearizing outputs of real systems are often derived by hand (Section~\ref{sec:real-sys} and also \citet{goswami_data-driven_2023, westenbroek2020feedback,umlauft2017feedback}), they tend to have simple linearizing outputs and constitute easy learning tasks. Thus, to test the efficacy of the method in finding such coordinate transforms, where hand derivations can fail because of the systems' complexity, we also consider both SISO and MIMO synthetic systems. The results show that our approach can achieve good empirical performance for systems of various sizes in terms of both the quality of the linearizing transform and the controller performance.
%Before proceeding to the numerical evaluation of our proposed algorithm, we need to introduce the method deployed to generate synthetic feedback linearizable systems and a metric designed to uniformly evaluate alternative learning strategies.

\subsection{Learning Setup and Evaluation}
We employ Jaxpi \citet{wang2023expertsguidetrainingphysicsinformed} as our PINN implementation framework and adopt the modified MLP architecture proposed in~\citet{wang2020understandingmitigatinggradientpathologies} for our candidate PDE solutions, as it exhibits superior performance compared to standard feed-forward MLPs. 
As a baseline, we implement the \textbf{direct} single-network approach that learns a linearizing output supervised by both conditions \textbf{CA} and \textbf{CB}.

\begin{comment}
Specifically, we parameterize the modified MLPs with three hidden layers of 256 neurons and GELU activation functions. The network weights are initialized using the scheme suggested in~\citet{wang2022randomweightfactorizationimproves} and training is performed using the ADAM optimizer for $200$--$300\times 10^3$ iterations for the first and $150$--$200\times 10^3$ for the subsequent layers of the cascade.  % \red{We refer to this method as ??? in what follows.}
%$180$--$240\times 10^3$ iterations for the first layer of the cascade, and $90$--$180\times 10^3$ iterations for subsequent layers.
\end{comment}

In addition to reporting the metrics explicitly included in the losses supervising training, we seek to directly measure how well the learned coordinate transform linearizes the system dynamics. To do so, we sample $\vcx$ uniformly in $\stD$, $\vcu \sim \stN(\bf{0}, \mtI)$ and compute the difference between the true time derivative of the transform
$\dot{\hat\Phi}(\vcx) = D_\vcx \hat\Phi(\vcx) (f(\vcx) + g(\vcx) \vcu)$
and the time derivative predicted by the linear dynamics
$\dot{\hat\Phi}^{pred}(\vcx) = \mtA \hat\Phi(\vcx) + \mtB (\lie_f \hat\Phi_{n-1}(\vcx) + \lie_g \hat\Phi_{n-1}(\vcx) \vcu)$.
We then measure the quality of linearization as the relative error of the true and approximate time derivatives:
\begin{equation} \tag{RLQE}
    % \varepsilon(\vcx) := \frac{\norm{\dot{\hat\Phi}(\vcx) - \dot{\hat\Phi}^{pred}(\vcx)}}{\norm{\dot{\hat\Phi}(\vcx)}}.
    \varepsilon(\vcx) := \norm{\dot{\hat\Phi}(\vcx) - \dot{\hat\Phi}^{pred}(\vcx)}{\big /}\norm{\dot{\hat\Phi}(\vcx)}.
\end{equation}
This \emph{relative linearization quality error} (RLQE) is especially useful for comparing our proposed cascadic approach and the single network baseline, as the former intentionally introduces a cascade inconsistency between $T_{i+1}(\vcx)$ and $\lie_fT_i(\vcx)$ but the latter does not, making their PDE residual not directly comparable. In contrast, RLQE provides a fair comparison as it directly quantifies the quality of the downstream linearization. Finally, we report the \emph{folded fraction}: the proportion of sampled states at which the sign of $det(D_\vcx\Phi)$ differs from the dominant sign of the rest of the samples, implying that the learned transform ceases to be a diffeomorphism.

\subsection{Controller Setup and Evaluation}\label{sec:control-setup}
Once the linearizing coordinate transform is learned, we construct the downstream tracking controller following the procedure in Section~\ref{sec:controller} with $\mtQ:=q\mtI$ and $\mtR:=\mtI$, where $q \in [1, 5]$ is system-specific. To evaluate the controller, we generate 64 reference trajectories for each system by using a ground-truth feedback linearization controller to regulate the state to the origin from initial conditions randomly sampled from the interior of the domain. Example reference trajectories are visualized in Figures~\ref{fig:trajectories} and~\ref{fig:xz-comparison}. To test robustness, we apply the tracking controller from perturbed initial conditions, with perturbations sampled from an $\ell_\infty$ ball of radius $0.15$. We simulate the controllers for $T \in [8, 10]$ seconds with RK45 numerical integration and a step size of $0.002$s.

% via a nominal linearizing output--considered unknown during training--and an LQR-based controller that tries to stabilize the system in the linearized state space. Typically, for this controller, we consider a small state cost, compared to the one in the corresponding tracking controller, to prevent the generated trajectories from  directly converging to the origin. Furthermore, we ensure that the reference trajectories stay within the bounds of the training domain and we randomly perturb their initial states to acquire the initial conditions for the controlled system.

We evaluate the tracking controller with the \emph{average tracking error} ($\frac{1}{T} \int_0^T\norm{\vcx - \bar\vcx} \mathrm{d}t$), the \emph{final state error} ($\norm{\vcx(T) - \bar\vcx(T)}$), and the \emph{total control effort} ($\int_0^T\norm{\vcu} \mathrm{d}t$). For each system, we report the median errors among the 64 reference trajectories, along with the number of diverging trajectories, to capture potential stability issues. To contextualize the results, we also report the performance of a feedback linearization controller constructed with the ground-truth diffeomorphism.

\subsection{Results: Real Systems} \label{sec:real-sys}
We first test our pipeline on two textbook examples with plain linearizing outputs: a single-link manipulator (Example 6.10 in \citet{slotine1991applied}) and a planar quadrotor.

\subsubsection{Flexible Joint}
The single link manipulator is a fourth order SISO system with a linearizing output $h(\vcx)=x_1$ (see Example 6.10 in \citet{slotine1991applied}). We try the single-network approach of Section~\ref{sec:limits-brackets} and observe that it is sufficient for the minimization of the composite loss, consisting of both bracket and high-order Lie derivative terms. The residuals of conditions \textbf{CA} for the learned linearization output are summarized in  Table~\ref{table:flex1} and we note that they indeed result in negligible RLQE. 
\begin{comment}
\begin{align*}
\dot{x}_1 &= x_2 \; ; \quad 
\dot{x}_2 = \frac{1}{M}\big(-mgl\sin(x_1) - k(x_1 - x_3) \big) \\
\dot{x}_3 &= x_4 \; ; \quad 
\dot{x}_4 = \frac{k}{J}(x_1 - x_3) + \frac{1}{J}u
\end{align*}
\end{comment}

\begin{table}[h]
\scriptsize
\centering
\resizebox{\linewidth}{!}{
\begin{tabular}{ccccc}
\toprule
$|L_g h|$ & |$L_g L_f h|$ & |$L_g L_f^2 h$| & $L_g L_f^3 h$ & RLQE \\
\midrule
$4.3 \times 10^{-7}$ & $1.6 \times 10^{-7}$ & $9.1 \times 10^{-7}$ & $1.101$ & $\mathbf{1.4\times10 ^{-8}}$\\
\bottomrule
\end{tabular}}
\caption{Mean Lie derivatives and RLQE for the learned output for the direct method in the flexible joint example.}
\label{table:flex1}
\end{table}
\begin{comment}
\begin{table}[h]
\scriptsize
\centering
\resizebox{\linewidth}{!}{
\begin{tabular}{ccccc}
\toprule
$|L_g h|$ & |$L_g L_f h|$ & |$L_g L_f^2 h$| & $L_g L_f^3 h$ & RLQE \\
\midrule
$2.2 \times 10^{-6}$ & $4.3 \times 10^{-6}$ & $2.1 \times 10^{-5}$ & $1.006$ & $\mathbf{5.4\times10 ^{-7}}$\\
\bottomrule
\end{tabular}}
\caption{Mean Lie derivatives and RLQE for the learned output for the cascade method in the flexible joint example.}
\label{table:flex3}
\end{table}
\end{comment}
For the tracking task, we report the performance of both learned and ground truth transforms in the domain $\mathcal{D}=[-\pi,\pi]\times[-5,5]\times[-\pi,\pi]\times[-5,5]$ in Table~\ref{table:flex2}. The learned linearizing outputs demonstrate robustness to small numerical training errors, captured by the RLQE, achieving nearly identical performance metrics as the ground truth controller.
\begin{table}[h]
\centering
\begin{tabular}{lccc}
\toprule
\textbf{Metric} & \textbf{Ground Truth} &\textbf{Direct} &\textbf{Cascadic} \\
\midrule
RLQE & --- & $1.4\times10^{-8}$ & $5.4\times10^{-7}$\\
Folded fraction & --- & $0.00$\% & $0.00$\%\\
\midrule
Avg. error & 1.000 & 1.003 &  1.010\\
%Avg. $z$-error & 0.260 & 0.303 \\
Final error & 0.004 & 0.004& 0.004\\
Control effort & 605.234 & 605.333 & 603.026 \\
\bottomrule
\end{tabular}
\caption{Tracking performance of the nominal and learned transforms in the single link manipulator system.}
\label{table:flex2}
\end{table}
\subsubsection{Planar Quadrotor}
Another system known to be differentially flat and, therefore, dynamically feedback linearizable is the planar quadrotor in extended dynamics form, represented by the state equations
\begin{align*}
\dot{p}_1 &= v_1; & \dot{v}_1 &= -\frac{F}{m}\sin(\theta); & \dot{F} &= \nu; & \dot{\nu} &= u_1 \\
\dot{p}_2 &= v_2; & \dot{v}_2 &= \frac{F}{m}\cos(\theta) - g; & \dot{\theta} &= \omega; & \dot{\omega} &= \frac{1}{I}u_2
\end{align*} in the square geometric domain $\mathcal{D} = [-5,5]^2$. This system admits the planar coordinates $p_1$, $p_2$ as linearizing outputs, each with relative degree four.
For brevity, we omit the exposition of the 16 residuals related to conditions \textbf{CC} and instead focus on the performance of the learned transforms in the downstream tracking task. These results are summarized in Table~\ref{tab:planar-quad}. 
%Due to space limitations, we omit the exposition of linearization quality metrics regarding MIMO systems and focus on the performance of the learned transforms in the downstream tracking task.

\begin{table}[h]
\centering
\begin{tabular}{lccc}
\toprule
\textbf{Metric} & \textbf{Ground Truth} &\textbf{Direct}  &\textbf{Cascadic} \\
\toprule
Folded fraction & --- & $0.000$\% & $0.000$\%\\
\midrule
Avg. error & 0.186 & 0.186 & 0.186 \\
%Avg. $z$-error & 0.303 & 0.136 \\
Final error & 0.002 & 0.002 & 0.002\\
Control effort & 149.841 & 149.860 & 150.488\\
\bottomrule
\end{tabular}
\caption{Tracking performance of the nominal and learned transforms in the planar quadrotor system.}
\label{tab:planar-quad}
\end{table}

\begin{comment}
We observe that, despite the increased dimension, \rewrite{a}{the naive} single-network-per-output approach is sufficient \rewrite{for the discovery of an output which can}{to near-}perfectly emulate the \rewrite{analytic nominal}{analytic} linearizing output.However, \rewrite{we would like to further examine the empirical performance of both learned models in settings where linearization cannot be achieved to almost machine precision.}{a primary motivation of this work is to learn unknown and analytically opaque linearizing outputs, requiring evaluation on more complex systems.}
\end{comment}
\begin{comment}
\begin{figure}[h]
\centering
\begin{subfigure}{0.2\textwidth}
    \centering
    \includegraphics[width=\linewidth, keepaspectratio]{assets/planar_quad1.png}
    \caption{}
\end{subfigure}
\hfill
\begin{subfigure}{0.2\textwidth}
    \centering
    \includegraphics[width=\linewidth, keepaspectratio,]{assets/planar_quad2.png}
    \caption{}
\end{subfigure}
\caption{Trajectory tracking performance of learned and ground truth models in the planar quadrotor system. Markers denote the initial states of trajectories.}
\label{fig:trajectories}
\end{figure}
\end{comment}

\subsection{Results: Synthetic Systems}\label{subsec:synthetic}
To test our proposed algorithm on more challenging systems, we consider a parameterized class of synthetic systems. In the SISO case, such systems can be generated by applying a diffeomorphism $\vcz = \tilde\Phi(\vcx)$ and an affine input transformation $\vcu = \alpha(\vcx) + \beta(\vcx) \vcv$ to a linear system in Brunovsky canonical form \eqref{eq:linearized-dynamics}. The resulting dynamics functions are given as:
$
f(\vcx) = [D_\vcx \tilde\Phi(\vcx)]^{-1}\big(\mtA \tilde\Phi(\vcx) + \mtB \alpha(\vcx)\big), \;
g(\vcx) = [D_\vcx \tilde\Phi(\vcx)]^{-1} \mtB \beta(\vcx).
$
This procedure ensures linearizability by construction, and the learning problem amounts to reconstructing $\tilde{\Phi}(\vcx)$. MIMO systems can be similarly synthesized with a generalized, vector-valued affine control law. To sample a diverse set of such systems, we parameterize the diffeomorphism $\tilde\Phi$ as an invertible neural network \cite{ardizzone2018analyzing} with randomly initialized weights. More details are provided in Appendix~\ref{app:synthetic} of the full version \cite{fullversion}. We visualize a representative 2D synthetic system in Figure~\ref{fig:xz-comparison} to illustrate the nonlinearity of the synthesized system.
We consider three system configurations: 2D SISO, 4D SISO, and 8D MIMO systems. The 2D SISO setting allows us to validate the evaluation methodology, providing visual insight into the synthesized systems and the overall tracking performance. The 4D SISO systems demonstrate the advantages of the cascadic approach in the presence of higher-order derivatives. Finally, the 8D MIMO systems illustrate the scalability of the proposed method to higher dimensions. For each setting, we generate 10 synthetic systems. We evaluate our learned coordinate transforms on the hyper-cubic domain $\mathcal{D}:=[-1.75, 1.75]^n$, reporting the metrics RLQE and folded fraction (each estimated over 100,000 uniformly sampled points) and the controller metrics introduced in Section~\ref{sec:control-setup}. The aggregate results are summarized in Table~\ref{table:combined}.
% Regarding tracking, we adhere to the methodology proposed in Section~\ref{sec:control-setup}, reporting the mean value of the suggested metrics across all systems for each setting. We also report the fraction of systems with "unstable" behavior as Diverged(any) and the fraction of divergent trajectories across these systems as Perc. diverged. The aggregate results are summarized in Table~\ref{table:combined}.
\begin{table*}[t]
\centering\scriptsize
\begin{tabular}{l|ccc|ccc|ccc}
\toprule
& \multicolumn{3}{c|}{\textbf{2D SISO}} 
& \multicolumn{3}{c|}{\textbf{4D SISO}} 
& \multicolumn{3}{c}{\textbf{8D MIMO}} \\
\textbf{Metric} 
& \textbf{Ground Truth} & \textbf{Direct} & \textbf{Cascadic} 
& \textbf{Ground Truth} & \textbf{Direct} & \textbf{Cascadic} 
& \textbf{Ground Truth} & \textbf{Direct} & \textbf{Cascadic} \\
\midrule
RLQE ($\downarrow$)
& --- & $0.008\%$ & $0.009\%$
& --- & $2.92\%$ & $0.035\%$
& --- & --- & --- \\
Folded fraction ($\downarrow$)
& --- & $0.000\%$ & $0.000\%$
& --- & $11.376\%$ & $0.004\%$
& --- & $1.162$ \% & $0.000$\% \\
\midrule
Avg. Error ($\downarrow$)
& 0.0483 & 0.0483 & 0.0483 
& 0.1172 & 0.1340 & 0.1178& 0.1079 & 0.1210 & 0.1076 \\
Final Error ($\downarrow$)
& 0.0001 & 0.0001 & 0.0001 
& 0.0015 & 0.0017 & 0.0020 
& 0.0008 & 0.0031 & 0.0008 \\
Control Effort ($\downarrow$)
& 45.5854 & 45.5935 & 45.5946 
& 103.6850 & 110.6017 & 103.4540 
& 183.13 & 183.95 & 183.5 \\
Perc. Diverged ($\downarrow$)
& --- & 0.00\% & 0.00\% 
& --- & 4.53\% & 0.00\% 
& --- &0.00\% & 0.00\% \\
\bottomrule
\end{tabular}
\caption{Combined tracking performance across 2D SISO, 4D SISO, and 8D MIMO synthetic systems.}
\label{table:combined}
\end{table*}

\subsubsection{2D SISO Systems}
For 2D SISO systems, both learned coordinate transforms (cascadic and direct) achieve highly RLQE-accurate linearization due to the lack of high-order derivatives. Consequently, the performance of their corresponding controllers closely matches that of the ground-truth controller. Figure~\ref{fig:xz-comparison} illustrates the learned diffeomorphism for a representative system: the box domain $\stD$ in the $\vcx$-space warps nonlinearly in the $\vcz$-space, while the elliptical Lyapunov function native to the linearized $\vcz$-space exhibits nonlinear behavior in the $\vcx$-domain. To contextualize the theoretical results, Figure~\ref{fig:trajectories} visualizes the estimated forward-invariant error set in Theorem~\ref{thm:tracking} with estimated learning error bounds. The minimal learning errors in this case lead to very tight error sets. Further numerical results and visualizations for 2D SISO systems can be found in Appendix~\ref{app:synthetic}.

% When the system is 2D, the two implementations--cascadic and single-network--can be used almost interchangeably because the compounding errors from computing high-order Lie derivatives are mild.
% since the only condition involving a higher order derivative is $\lie_g\lie_fh(\vcx)\neq0$, which is not strict in contrast to equalities encountered in higher order systems. We observe that the learned transforms perform identically to the nominal ones in the downstream tracking task, despite the anticipated increase in the average RLQE compared to the simple real systems and the absence of a custom treatment during the training phase. In addition Table~\ref{table:combined}, Figure~\ref{fig:trajectories} and~\ref{fig:xz-comparison}, further numerical results and visual evidence for 2D SISO systems can be found in Appendix~\ref{app:synthetic}.

\begin{comment}
\begin{table}[h]
\centering
\begin{tabular}{lcc}
\toprule
\textbf{Metric} & \textbf{Ground Truth} &\textbf{ Learned}\\
\midrule
RLQE& --- & $\mathbf{8.52 \times 10^{-5}}$ \\
\midrule
Avg. error & 0.0483 & 0.0483 \\
%Avg. $z$-error & 0.0464 & 0.0472 \\
Final error & 0.0001 & 0.00001 \\
%Final $z$-error & 0.0001 & 0.0001 \\
Control effort & 45.58 & 45.59 \\
\midrule
Diverged (any) & --- & \textbf{0.000} \\
Perc. diverged & --- & \textbf{0.000} \\
\bottomrule
\end{tabular}
\caption{Tracking performance in 2D SISO systems.}
\label{table:2d}
\end{table}
\end{comment}
\begin{figure}[h]
\centering
\begin{subfigure}{0.23\textwidth}
    \centering
    \includegraphics[width=\linewidth, keepaspectratio]{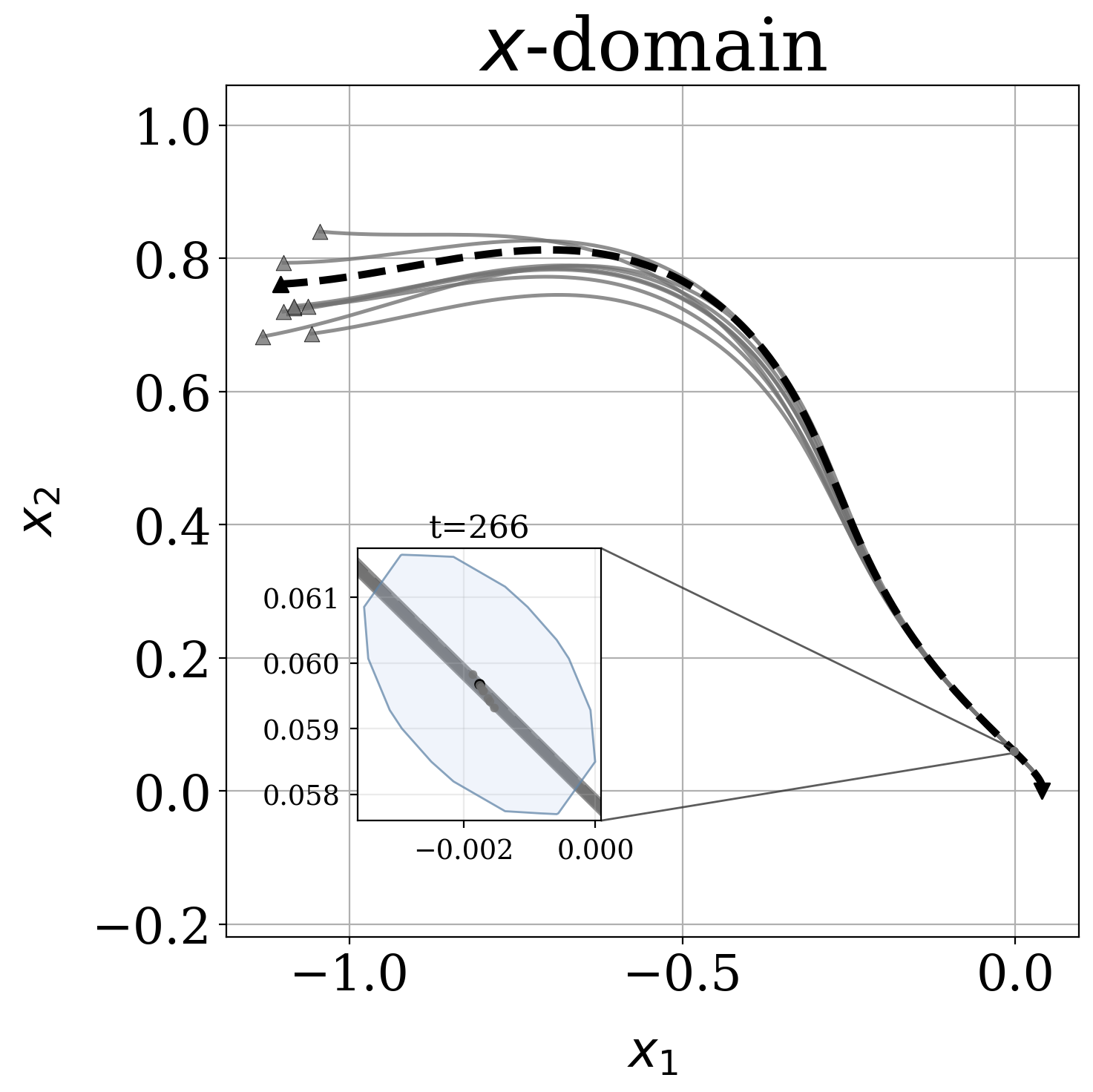}
\end{subfigure}
\begin{subfigure}{0.23\textwidth}
    \centering
    \includegraphics[width=\linewidth, keepaspectratio]{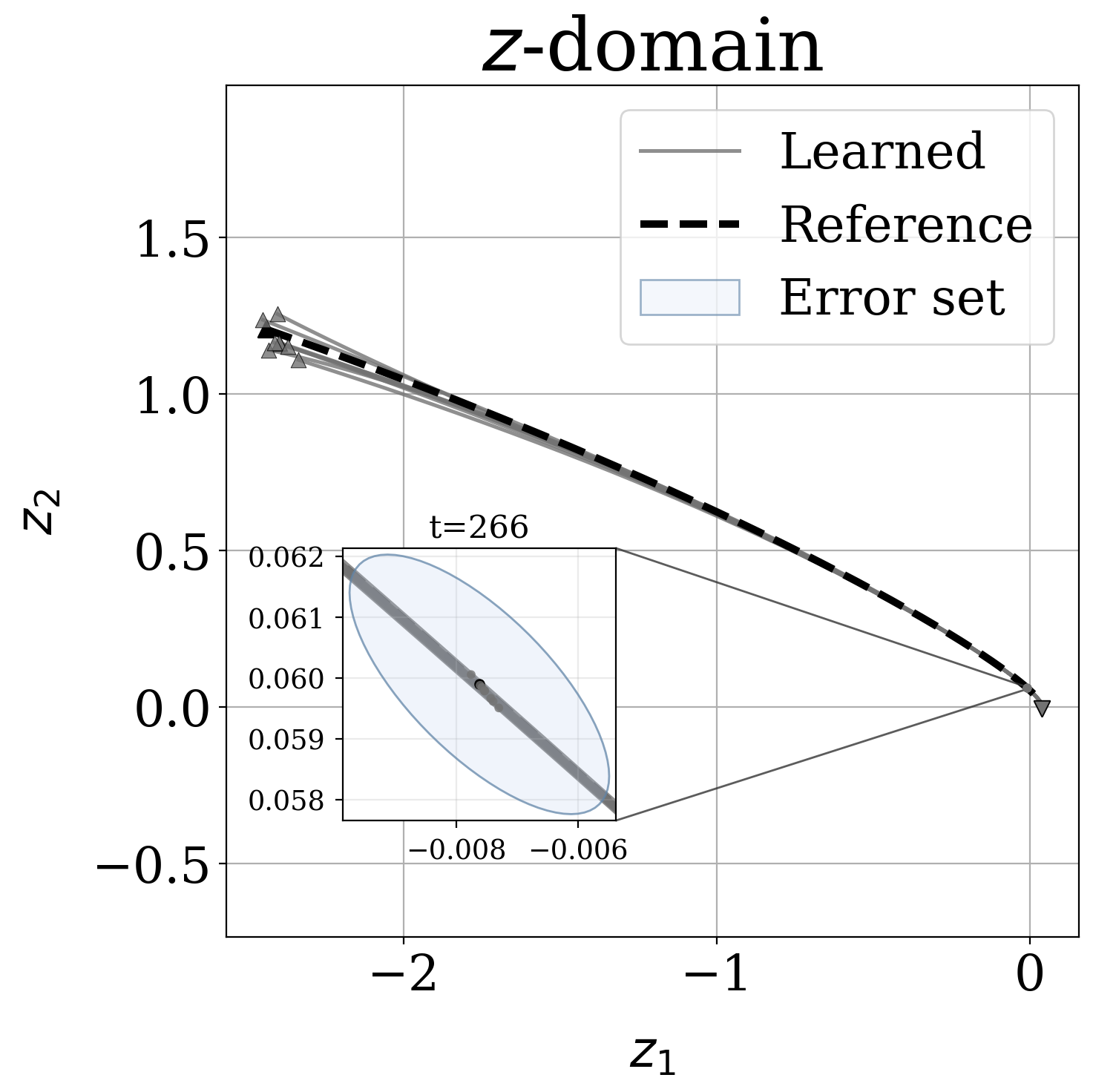}
\end{subfigure}
\caption{Learned controller tracking a reference trajectory from perturbed initial states. All trajectories converge to the reference and enter the estimated forward-invariant error tube (Theorem 2), shown in the insets, where they subsequently remain. Learning-error bounds are estimated via sampling.
}
\label{fig:trajectories}
\end{figure}
\begin{comment}
\begin{figure}[h]
\centering
\begin{subfigure}{0.2\textwidth}
    \centering
    \includegraphics[width=\linewidth, keepaspectratio]{assets/synthetic_2.png}
    \caption{}
\end{subfigure}
    \hspace{5mm}
\begin{subfigure}{0.2\textwidth}
    \centering
    \includegraphics[width=\linewidth, keepaspectratio,]{assets/synthetic_1.png}
    \caption{}
\end{subfigure}
\caption{Representative trajectories for a 2D synthetic SISO system. Markers denote the initial states of trajectories.}
\label{fig:trajectories}
\end{figure}
\end{comment}
\begin{figure}[h]
\centering
\begin{subfigure}{0.23\textwidth}
    \centering
    \includegraphics[width=\linewidth, keepaspectratio]{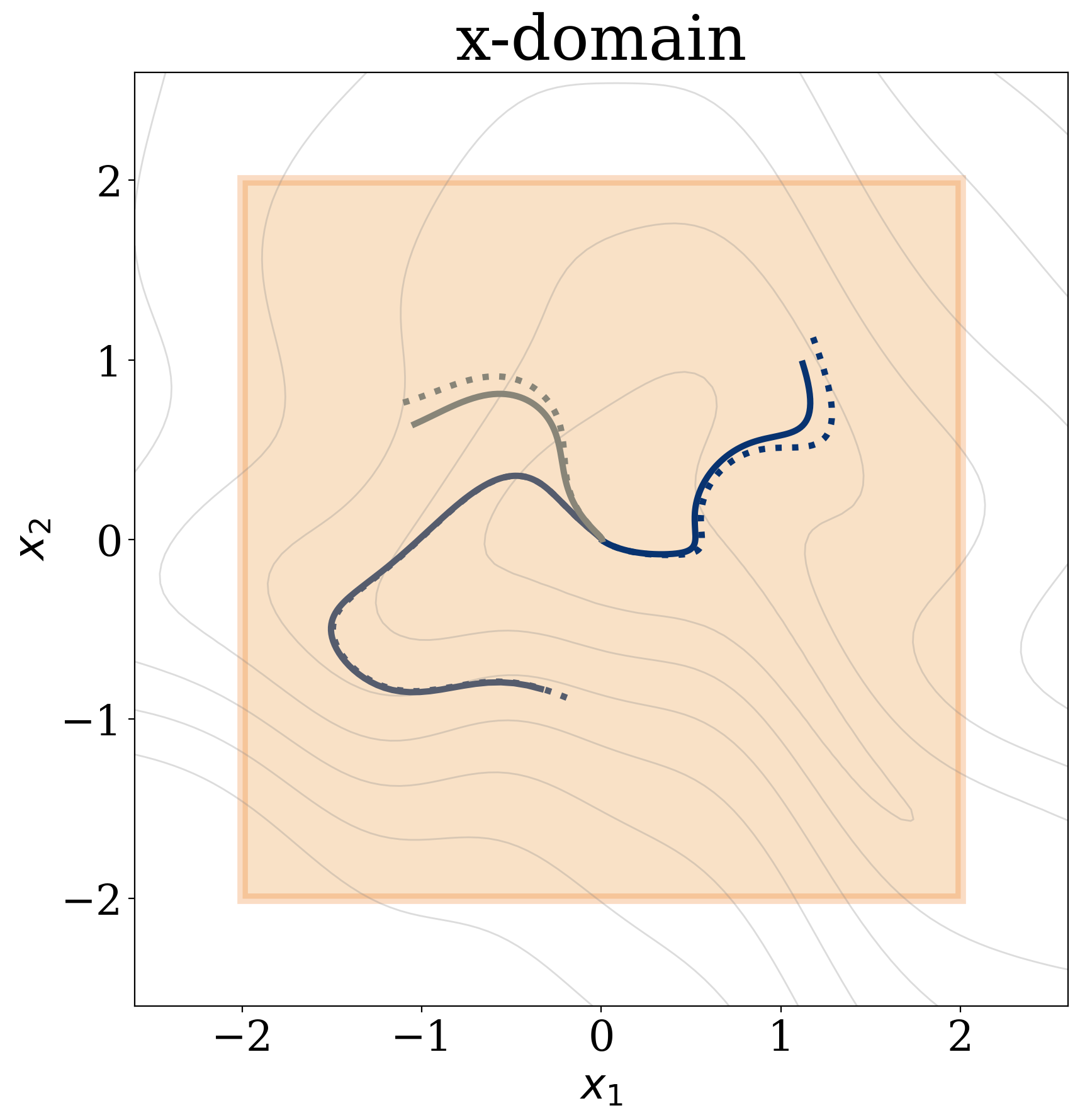}
\end{subfigure}
\begin{subfigure}{0.23\textwidth}
    \centering
    \includegraphics[width=\linewidth, keepaspectratio]{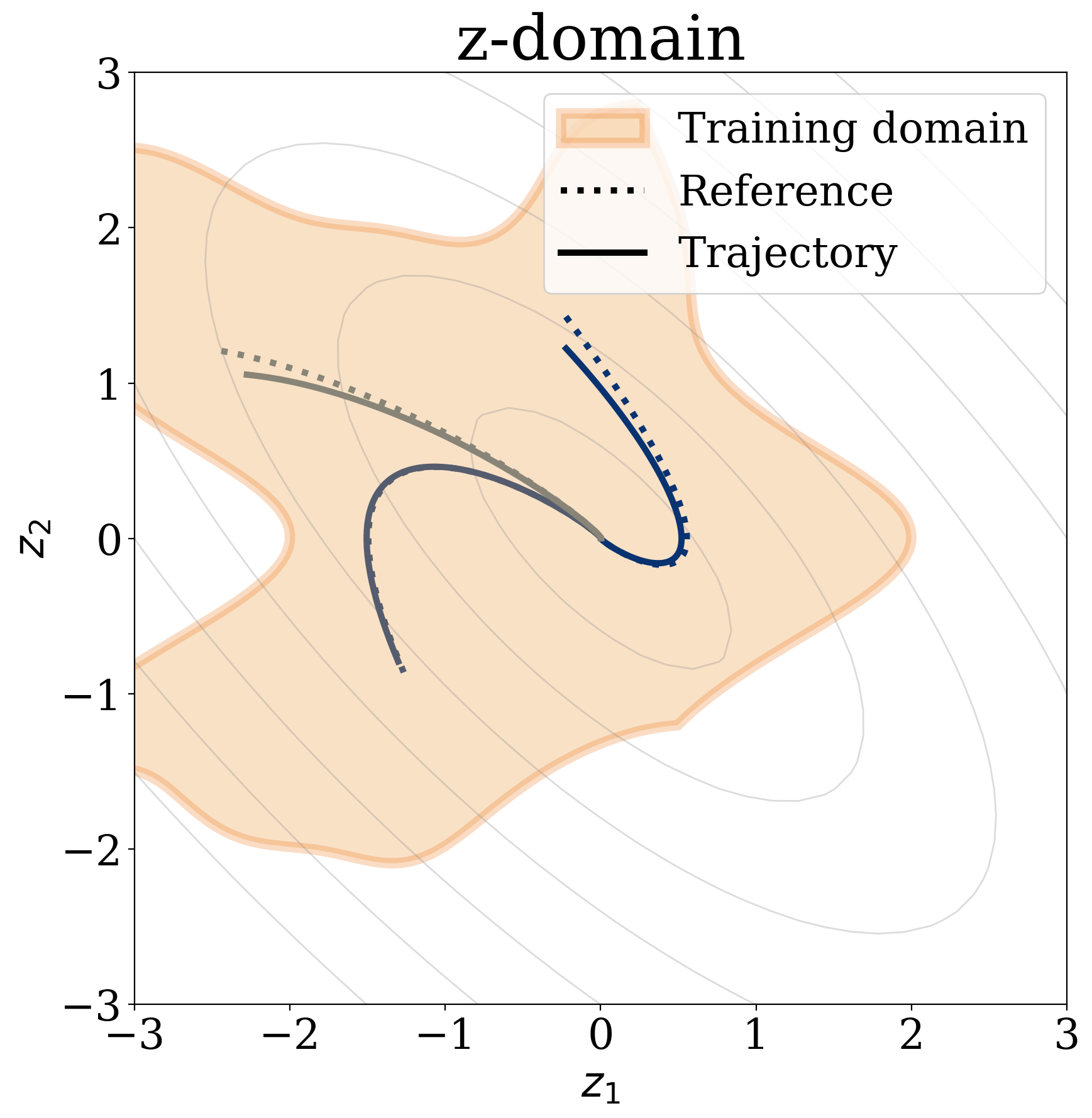}
\end{subfigure}
\caption{Sample trajectories for a 2D synthetic SISO system in both actual $\vcx$-space and linearized $\vcz$-space.}
\label{fig:xz-comparison}
\end{figure}

\subsubsection{4D SISO Systems}\label{sec:4dsiso} 
We now compare the direct and cascadic methods on 4D SISO systems. Across 10 randomly generated systems, the cascadic approach achieves an RLQE two orders of magnitude lower than the direct method and prevents the severe folding observed in the latter, effectively mitigating problems arising from high-order derivatives supervision. Furthermore, we empirically observe that the quasi-singularities of $D_\vcx\Phi$ are highly correlated with large evaluation errors of the supervision losses across all methods and systems. Finally, our cascadic approach is consistently stable and matches the tracking performance of the ground-truth controller. In contrast, the direct approach diverges on \textit{half} of the tested systems (29 out of 640 trajectories across 10 systems). These results establish that a cascaded structure is critical for applying PINN-based methods to complex nonlinear systems.

\subsubsection{8D MIMO Systems}
Finally, we evaluate the generalization of both approaches to MIMO systems by considering 8D MIMO systems with two fourth-degree linearizing outputs. The increased system dimension, together with the need to learn multiple outputs, introduces additional challenges for the proposed method. However, the learned coordinate transforms again achieve tracking performance on par with the ground-truth controller. Comparing the two learning approaches, we should also emphasize that the direct approach is not only more computationally inefficient in this setting, but also yields a more fragile approximation of the diffeomorphism. Concretely, the significant empirical folded fraction implies an ill-conditioned representation, undermining the reliability of the {downstream controller}.

%% file: sections/7-conclusion.tex
\section{Conclusion}\label{sec:conclusion}
This paper introduced a cascadic physics-informed neural network method to systematically discover feedback-linearizing coordinate transforms for nonlinear systems. By decomposing complex, high-order partial differential equations into a sequence of tractable, first-order residual minimizations, our approach successfully mitigated compounding approximation errors and enabled robust downstream tracking control. Extensive validation demonstrated strong empirical performance across real robotic systems and complex synthetic systems of up to eight dimensions. Future work will extend this framework to approximately linearizable systems (\cite{krener_approximate_1984, johansen2000computational}), systems with non-trivial zero-dynamics,  and to systems where the underlying dynamics are subject to uncertainty.

%% file: sections/A1-MIMO.tex
\subsection{FBL for Multi-Input Multi-Output Systems}\label{appendix:mimo}
Similar to the SISO setting presented in Section~\ref{sec:FBL}, the conditions for exact feedback linearization of the control affine system \eqref{eq:control-affine-sys} in the MIMO setting mirrors can be characterized in terms of the relative degree of the system. Recall that a function $h(x)$ has relative degree $r$, when, for all $x\in \stD$, 
\begin{gather}
    \mathcal{L}_{g_j}\mathcal{L}_{f}^i h(\vcx)=0 \quad\forall i\in\{0,\ldots,r-2\}, j\in\{1,\ldots,m\},\\
    \mathcal{L}_{g_j}\mathcal{L}_{f}^{r-1} h(\vcx)\neq0 \quad \forall j\in\{1,\ldots,m\}.
\end{gather}
Consider a set of real-valued smooth functions $\{h_i(x)\}_{i=1}^m$, each with relative degree $r_i$. Denoting $\vcz_i=h_i(\vcx)$, we have that the linearized state $\vcz := \begin{pmatrix} \vcz_1,\hdots,\vcz_m \end{pmatrix}$ evolves as $m$ decoupled chains of integrators (i.e. in Brunovsky canonical form). Specifically, denoting the $r$-th time derivative of $z_i$ as $h^{(r)}$, we get:
\begin{equation}
    \begin{pmatrix}
    \vcz_1^{(r_1)},\hdots,\vcz_m^{(r_m)}
    \end{pmatrix}= \vcv := b(\vcx) + A(\vcx)u,
\end{equation}
where the drift term $b:\R^n\to\R^m$ and the decoupling matrix $A:\R^n\to\R^{m\times m}$ can be written down analytically, respectively as
\begin{align}\label{eq:mimo-decoupling-matrix}
A(\vcx)=\begin{pmatrix}
    \mathcal{L}_{g_1}\mathcal{L}_{f}^{r_1-1}h_1(\vcx)&\hdots& \mathcal{L}_{g_m}\mathcal{L}_{f}^{r_1-1}h_1(\vcx) \\
    \vdots&\ddots&\vdots\\
    \mathcal{L}_{g_1}\mathcal{L}_{f}^{r_m-1}h_m(\vcx)&\hdots& \mathcal{L}_{g_m}\mathcal{L}_{f}^{r_m-1}h_m(\vcx)
\end{pmatrix},\end{align}
\vspace{0.4mm}
\begin{equation}\label{eq:mimo-drift-term}
b(\vcx)=\begin{pmatrix}
\lie_f^{r_1}h_1(\vcx), \hdots, \lie_f^{r_m}h_m(\vcx)
\end{pmatrix}.
\end{equation}

Denote the map $\Phi:\R^n\to\R^n$ as:
\begin{equation}\label{eq:mimo-diffeo}
   \Phi(\vcx)=\begin{pmatrix}
    h_1(\vcx),\hdots,\lie_f^{r_1-1}h_1(\vcx),\hdots, h_m(\vcx),\hdots,\lie_f^{r_m-1}h_m(\vcx) 
\end{pmatrix}.
\end{equation}
When the decoupling matrix $A(\vcx)$ is invertible and the total relative degree of the system $r := r_1+\hdots+r_m$ matches the state dimension $n$, i.e.,
\begin{equation}
    r_1+\hdots+r_m = n,
\end{equation}
the map $\Phi$ is a diffeomorphism and a linearizing coordinate transform for the system \eqref{eq:control-affine-sys}. The corresponding linearizing control law is given as
\begin{equation}\label{eq:mimo-control-law-apdx}
    \vcu = A(\vcx)^{-1}(\vcv - b(\vcx)),
\end{equation}
which is equivalent to the SISO case.

%% file: sections/A2-error-bounds.tex
\subsection{Proof of Theorem~\ref{thm:tracking} in the General MIMO setting}\label{appendix:proof}
We prove Theorem~\ref{thm:tracking} in the MIMO setting, which implies the results in Section~\ref{sec:controller} as special cases. Similar to the SISO setting, we consider the task of following a feasible trajectory $(\bar\vcx, \bar\vcu)$ and assume that the reference trajectory lies within the training domain.
\begin{cob}{}
\begin{assumption}\label{assm:mimo-bounded-reference}
    For all $t \geq 0$, the reference trajectory lies within the domain, i.e.,
    $ \bar\vcx(t) \in \stD$, and the reference inputs $\bar \vcv$ are uniformly bounded: $\norm{\bar \vcv(t)} \leq \bar V$.
\end{assumption}
\end{cob}

Recall that the linearizing coordinate transform takes the form of \eqref{eq:mimo-diffeo}. Therefore, we denote our in learned, \textit{approximate} diffeomorphism $\hat\Phi$ as
\begin{equation*}
    \vcz
    = \hat \Phi(\vcx)
    = \Big( \hat{\Phi}_{1,1}(\vcx), \cdots, \hat{\Phi}_{1, r_1}(\vcx), \cdots, \hat{\Phi}_{m,1}(\vcx), \cdots, \hat{\Phi}_{m, r_m}(\vcx) \Big)^\top.
\end{equation*}
Mirroring the expressions in \eqref{eq:mimo-decoupling-matrix} and \eqref{eq:mimo-drift-term}, we define
\begin{equation*}
    A_{\hat\Phi}(\vcx) = \begin{pmatrix}
        \mathcal{L}_{g_1}\hat{\Phi}_{1,r_1}(\vcx)&\hdots& \mathcal{L}_{g_m}\hat{\Phi}_{1,r_1}(\vcx) \\
        \vdots&\ddots&\vdots\\
        \mathcal{L}_{g_1}\hat{\Phi}_{m,r_m}(\vcx)&\hdots& \mathcal{L}_{g_m}\hat{\Phi}_{m,r_m}(\vcx)
    \end{pmatrix},\quad
    b_{\hat\Phi}(\vcx) = \begin{pmatrix}
        \lie_f \hat{\Phi}_{1,r_1}(\vcx)\\ \vdots\\ \lie_f \hat{\Phi}_{m,r_m}(\vcx)
    \end{pmatrix}.
\end{equation*}
The error-feedback control law takes the same form as in the SISO setting:
\begin{equation}\label{eq:mimo-control-law}
    \vcu(\vcx, t) = -A_{\hat{\Phi}}^{-1}(\vcx) \left( b_{\hat{\Phi}}(\vcx) - (\bar \vcv(t) + \mtK(\vcz - \bar \vcz(t))\right).
\end{equation}
Here, $\bar\vcz:=\hat\Phi(\vcx)$, $\bar\vcv:=A_{\hat{\Phi}}(\vcx)u + b_{\hat{\Phi}}(\vcx)$, and the robust feedback gain $\mtK$ is found by solving the $\stH_\infty$ Riccati equation \eqref{eq:h-infty-riccati} with appropriate choices of $\mtQ$ and $\mtR$.

To start, we note that under the coordinate transform $\hat\Phi$, the \textit{approximately linearized} $\vcz$ dynamics follows
\begin{equation}\label{eq:mimo-z-dynamics}
    \dot\vcz
    = \frac{d}{dt} \hat {\hat\Phi}(\vcx)
    = \begin{bmatrix}
        \frac{d}{dt} {\hat\Phi}_{1,1}(\vcx) \\ \vdots \\ \frac{d}{dt}{\hat\Phi}_{1, r_1 - 1}(\vcx) \\ \vdots
    \end{bmatrix}
    = \begin{bmatrix}
        \lie_f {\hat\Phi}_{1,1}(\vcx) \\ \vdots \\ \lie_f {\hat\Phi}_{1,r_1 - 1}(\vcx) \\ \vdots
    \end{bmatrix}
    + \begin{bmatrix}
        \sum_{i=1}^{m}\lie_{g_i} {\hat\Phi}_{1,1}(\vcx) \vcu_i \\ \vdots \\  \sum_{i=1}^{m}\lie_{g_i} {\hat\Phi}_{1,r_1 - 1}(\vcx) \vcu_i \\ \vdots
    \end{bmatrix}.
\end{equation}
To streamline notation, we define the vector function $\vcd^{c}: \stD \to \R^{n}$ and matrix function $\stR: \stD \to \R^{n \times m}$ as
\begin{equation*}
    \vcd^c(\vcx) := \begin{bmatrix}
        \lie_f {\hat\Phi}_{1,1}(\vcx) - {\hat\Phi}_{1,2}(\vcx) \\ \vdots \\ \lie_f {\hat\Phi}_{1,r_1 - 1}(\vcx) - {\hat\Phi}_{1,r_1}(\vcx)\\ \vdots
    \end{bmatrix},\quad
    [\stR(\vcx)]_{i,j} = \lie_{g_j}{\hat\Phi}_i(\vcx).
\end{equation*}
Applying the control \eqref{eq:mimo-control-law}, the $\vcz$ dynamics \eqref{eq:mimo-z-dynamics} evolves as
\begin{equation}
\begin{aligned}
    \dot\vcz
    &= \left(\mtA \vcz + \mtB \vcv + \vcd^c(\vcx)\right) + \stR(\vcx) \vcu \\
    &= \left(\mtA \vcz + \mtB \vcv + \vcd^c(\vcx)\right) + \stR(\vcx) A_{\hat\Phi}^{-1}(\vcx) \left(\vcv - b_{\hat\Phi}(\vcx)\right)\\
    &= \mtA \vcz + \left(\mtB + \underbrace{\stR(\vcx) A^{-1}(\vcx)}_{=:\Delta_B(\vcx)} \right) \vcv + \left(\underbrace{\vcd^c(\vcx) - \stR(\vcx) A_{\hat\Phi}^{-1}(\vcx) b_{\hat\Phi}(\vcx)}_{=:\vcd^z(\vcx)}\right).
\end{aligned}
\end{equation}
\noindent Define the $\vcz$-state error as $\vce := \vcz - \bar\vcz$. The error dynamics evolve as
\begin{equation*}
\begin{aligned}
    \dot\vce
    % &= \mtA(\vcz-\bar\vcz) + \mtB (\cancel{\bar v} + K \vce - \cancel{\bar v}) + \mtDelta_B(\vcx) (\bar v + K \vce) + \vcd^z(\vcx) - \mtDelta_B(\bar\vcx)\bar v - \vcd^z(\bar\vcx)\\
    &= \mtA \vce + (\mtB + \mtDelta_B(\vcx)) \mtK\vce + \underbrace{(\mtDelta_B(\vcx) - \mtDelta_B(\bar\vcx))\bar \vcv + \vcd^z(\vcx) - \vcd^z(\bar\vcx)}_{=:\vcd}.
\end{aligned}
\end{equation*}
We assume that the error terms $\Delta_B$ and $\vcd^z$ are small when the models are well-trained on the domain.
\begin{cob}{}
\begin{assumption}\label{assm:mimo-bounded-residual}
    We assume that for all $\vcx \in \stD$, the error terms $\Delta_B$ and $\vcd^z$ are bounded:
    \begin{equation*}
        \norm{\Delta_B(\vcx)}_{2\to 2} \leq \bar\delta,\quad
        \norm{\vcd^z(\vcx)}_2 \leq \bar d^z.
    \end{equation*}
\end{assumption}
\end{cob}
\noindent Note that from Assumption~\ref{assm:mimo-bounded-residual}, we clearly have that
\begin{equation*}
    \norm{\vcd}
    \leq \norm{\Delta_B(\vcx) + \Delta_B(\bar\vcx)}\norm{\bar\vcv} + \norm{\vcd^z(\vcx) - \vcd^z(\bar\vcz)}
    \leq 2(\bar\delta\bar V + \bar d^z) =: \bar d.
\end{equation*}
The boundedness of the approximation errors and the reference signal allows us to establish a forward-invariant set of the $\vcz$ tracking error.

\begin{cob}{}
    \begin{lemma}\label{lem:z-error}
        Under Assumptions~\ref{assm:mimo-bounded-reference} and \ref{assm:mimo-bounded-residual}, if $\bar\delta < \frac{\sqrt{\lmin(\mtR)}}{\gamma}$ and if $\vcx(t) \in \stD$ for all $t\geq 0$, the tracking error $\vce$ is uniformly ultimately bounded by the sublevel set
        \begin{equation}\label{eq:mimo-z-sublevel-set}
            \Omega_z = \left\{\vce \mid \vce^\top \mtP \vce \leq \frac{\lmax(\mtP)\bar d^2}{\lmin(\mtQ)\left( \frac{1}{\gamma^2} - \frac{\bar\delta^2}{\lmin(\mtR)} \right)}\right\}.
        \end{equation}
    \end{lemma}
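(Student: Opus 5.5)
We take the quadratic Lyapunov function $V(\vce)=\vce^\top\mtP\vce$, with $\mtP\succ0$ the stabilizing solution of the $\stH_\infty$ Riccati equation \eqref{eq:h-infty-riccati}. We differentiate $V$ along the error dynamics and show that $\dot V<0$ outside $\Omega_z$. Because $\vcx(t)\in\stD$ for all $t\ge 0$ is assumed, Assumption~\ref{assm:mimo-bounded-residual} holds pointwise along the trajectory, so $\norm{\Delta_B(\vcx(t))}\le\bar\delta$ and $\norm{\vcd(t)}\le\bar d$ for all $t$. We treat the closed loop as the nominal system $\dot\vce=(\mtA-\mtB\mtK)\vce$ with $\mtK=\mtR^{-1}\mtB^\top\mtP$, interpreting the feedback sign so that the nominal loop is stable. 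It is driven by two perturbations: the multiplicative input channel $\vcw_1:=\Delta_B(\vcx)\mtK\vce$ and the additive disturbance $\vcd$.

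\textbf{Derivative of $V$.} Compute
\[
\dot V=\vce^\top\big(\mtP(\mtA-\mtB\mtK)+(\mtA-\mtB\mtK)^\top\mtP\big)\vce+2\vce^\top\mtP(\vcw_1+\vcd).
\]
Substituting the Riccati identity, the quadratic part equals $-\vce^\top\mtQ\vce-\vce^\top\mtP\mtB\mtR^{-1}\mtB^\top\mtP\vce-\tfrac{1}{\gamma^2}\vce^\top\mtP\mtP\vce$.

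\textbf{Multiplicative perturbation.} Write $\vcw_1=\Delta_B\mtR^{-1}\mtB^\top\mtP\vce$ and $\vcy:=\mtR^{-1/2}\mtB^\top\mtP\vce$. Then $\norm{\vcw_1}\le\bar\delta\,\lmin(\mtR)^{-1/2}\norm{\vcy}$. Young's inequality with weight $\gamma^2$ gives
\[
2\vce^\top\mtP\vcw_1\le \tfrac{1}{\gamma^2}\alpha\,\vce^\top\mtP\mtP\vce+\tfrac{\gamma^2}{\alpha}\norm{\vcw_1}^2 .
\]
Choose the weight $\alpha\in(0,1)$ so that $\tfrac{\gamma^2\bar\delta^2}{\alpha\lmin(\mtR)}\le 1$. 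The $\norm{\vcy}^2$ term is then absorbed by $-\vce^\top\mtP\mtB\mtR^{-1}\mtB^\top\mtP\vce$. The leftover $(1-\alpha)\tfrac{1}{\gamma^2}\vce^\top\mtP\mtP\vce$ remains negative. Such an $\alpha<1$ exists precisely because $\bar\delta<\sqrt{\lmin(\mtR)}/\gamma$. The natural choice is $\alpha=\gamma^2\bar\delta^2/\lmin(\mtR)$, which leaves a margin $\tfrac{1}{\gamma^2}-\tfrac{\bar\delta^2}{\lmin(\mtR)}$ multiplying $\vce^\top\mtP\mtP\vce$. This is exactly the factor in the denominator of \eqref{eq:mimo-z-sublevel-set}.

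\textbf{Additive disturbance.} Bound $2\vce^\top\mtP\vcd\le \beta\,\vce^\top\mtP\mtP\vce+\bar d^2/\beta$, choosing $\beta$ to consume part of the remaining margin, for instance half of it. This yields
\[
\dot V\le -\lmin(\mtQ)\norm{\vce}^2+c\,\bar d^2\Big/\Big(\tfrac{1}{\gamma^2}-\tfrac{\bar\delta^2}{\lmin(\mtR)}\Big)
\]
for a numerical constant $c$. Using $\norm{\vce}^2\ge V/\lmax(\mtP)$, we get $\dot V<0$ whenever $V$ exceeds $\lmax(\mtP)\bar d^2/\big(\lmin(\mtQ)(\tfrac{1}{\gamma^2}-\tfrac{\bar\delta^2}{\lmin(\mtR)})\big)$, up to the constant. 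Standard comparison arguments then give forward invariance and uniform ultimate boundedness of $\Omega_z$.

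\textbf{Main obstacle.} The delicate step is the bookkeeping that splits the single $\tfrac{1}{\gamma^2}\mtP\mtP$ budget between the multiplicative and additive perturbations. A careless split changes the constant in \eqref{eq:mimo-z-sublevel-set}. One option is to absorb $\vcd$ instead through part of the $\mtQ$ term, using $\vce^\top\mtP\mtP\vce\le\lmax(\mtP)^2\norm{\vce}^2$ where needed. I would first try splitting the margin evenly, and I would accept an absolute constant factor in the radius if the exact constant in the statement cannot be matched.
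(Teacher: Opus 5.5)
Your argument is correct and is essentially the paper's proof: the paper uses $V=\tfrac12\vce^\top\mtP\vce$ and completes the square in the $\mtR$-weighted cross term instead of applying Young's inequality, but it reaches the same margin $\eta:=\tfrac{1}{\gamma^2}-\tfrac{\bar\delta^2}{\lmin(\mtR)}$ on $\norm{\mtP\vce}^2$. Your ``main obstacle'' goes away if you take $\beta=\eta$, i.e.\ spend the \emph{entire} remaining margin on $\vcd$ (no $\mtP\mtP$ margin needs to survive, since the decay comes from $-\vce^\top\mtQ\vce$): this gives $\dot V\le-\lmin(\mtQ)\norm{\vce}^2+\bar d^2/\eta$ and hence exactly the threshold defining $\Omega_z$, which is what the paper obtains by completing the square on $\norm{\mtP\vce}$, whereas splitting the margin in half would inflate the bound by a factor of $2$.
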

\end{cob}

\noindent \textit{Proof.} Consider the Lyapunov function for the error dynamics
\begin{equation}
    V(\vce) = \frac{1}{2}\vce^\top \mtP\vce,
\end{equation}
we have that
\begin{equation}
\begin{aligned}
    \dot V(\vce)
    &= \frac{1}{2}\left[\vce^\top \mtP (\mtA + (\mtB + \Delta_B)\mtK\vce + \vcd) + (\mtA + (\mtB + \Delta_B)\mtK\vce + \vcd)^\top \mtP \vce\right] \\
    &= \frac{1}{2}\vce^\top(\mtP\mtA + \mtA^\top \mtP - 2\mtP\mtB\mtR^{-1}\mtB^\top \mtP)\vce + \frac{1}{2}\vce^\top(\mtP\Delta_\mtB\mtK + \mtK^\top \Delta_B^\top \mtP)\vce + \vce^\top \mtP \vcd\\
    &= \frac{1}{2}\vce^\top(-\mtQ-\frac{\mtP\mtP}{\gamma^2} - \mtP\mtB\mtR^{-1}\mtB^\top \mtP)\vce -\frac{1}{2}\vce^\top(\mtP\Delta_\mtB\mtR^{-1}\mtB^\top \mtP + \mtP\mtB\mtR^{-1} \Delta_B^\top \mtP)\vce + \vce^\top \mtP \vcd\\
    &= -\frac{1}{2}\vce^\top \mtQ \vce - \frac{1}{2\gamma^2}\vce^\top \mtP \mtP\vce - \frac{1}{2}\underbrace{\vce^\top\left(\mtP\mtB\mtR^{-1}\mtB^\top \mtP + \mtP\Delta_B\mtR^{-1}\mtB^\top \mtP + \mtP\mtB\mtR^{-1}\Delta_B^\top \mtP \right)\vce}_{(\triangle)} + \vce^\top \mtP \vcd.
\end{aligned}
\end{equation}
Denoting $\vcw:= \mtR^{-1}\mtB^\top \mtP\vce$,
\begin{equation*}
\begin{aligned}
    (\triangle) &= \vcw^\top \mtR \vcw + (\mtP \vce)^\top \Delta_B \vcw + \vcw^\top \Delta_B^\top \mtP\vce\\
    &= (\vcw + \mtR^{-1} \Delta_B^\top \mtP\vce)^\top \mtR (\vcw + \mtR^{-1} \Delta_B^\top \mtP\vce) - (\mtP\vce)^\top \Delta_B \mtR^{-1} \Delta_B^\top \mtP \vce\\
    &\geq - (\mtP\vce)^\top \Delta_B \mtR^{-1} \Delta_B^\top \mtP \vce.
\end{aligned}
\end{equation*}
As a result, we can bound $\dot V$ as
\begin{equation*}
\begin{aligned}
    \dot V(\vce)
    &\leq -\frac{1}{2}\vce^\top \mtQ \vce - \frac{1}{2\gamma^2}\vce^\top \mtP \mtP\vce + \frac{1}{2}(\mtP\vce)^\top \Delta_B \mtR^{-1} \Delta_B^\top \mtP \vce + \vce^\top \mtP \vcd\\
    &= -\frac{1}{2}\vce^\top \mtQ \vce - \frac{1}{2}\vce^\top \mtP \left(\frac{1}{\gamma^2}I - \Delta_B \mtR^{-1} \Delta_B^\top \right) \mtP \vce + \vcd^\top \mtP \vce\\
    &\leq -\frac{1}{2}\vce^\top \mtQ \vce - \frac{1}{2}\lmin\left(\frac{1}{\gamma^2}I - \Delta_B \mtR^{-1} \Delta_B^\top\right)\norm{\mtP\vce}^2 + \bar d \norm{\mtP\vce}
\end{aligned}
\end{equation*}
The relevant minimum eigenvalue is bounded as
\begin{equation*}
    \lmin\left(\frac{1}{\gamma^2}I - \Delta_B \mtR^{-1} \Delta_B^\top\right)
    =\frac{1}{\gamma^2} - \lmax\left(\Delta_B \mtR^{-1} \Delta_B^\top\right)
    \geq \underbrace{\frac{1}{\gamma^2} - \frac{\bar\delta^2}{\lmin(\mtR)}}_{=:\eta} > 0,
\end{equation*}
where the last inequality follows from our assumption that $\bar\delta < \sqrt{\lmin(\mtR)} / \gamma$. Denoting this lower bound as $\eta$, completing the square on $\norm{\mtP\vce}$ yields
\begin{equation*}
\begin{aligned}
    \dot V(\vce)
    &\leq -\frac{1}{2}\vce^\top \mtQ \vce - \frac{1}{2}\eta\norm{\mtP\vce}^2 + \bar d \norm{\mtP\vce}\\
    &\leq -\frac{1}{2}\vce^\top \mtQ \vce - \frac{1}{2}\left(\sqrt{\eta}\norm{\mtP\vce} - \frac{\bar d}{\sqrt{\eta}}\right)^2 + \frac{\bar d^2}{2\eta}\\
    &\leq -\frac{1}{2}\lmin(\mtQ)\norm{\vce}^2 + \frac{\bar d^2}{2\eta} \\
\end{aligned}
\end{equation*}
As a result, the Lyapunov function is strictly decreasing when $\norm{\vce}^2 > \frac{\bar d^2}{\lmin(\mtQ)\eta}$. The tracking error is thus uniformly ultimately bounded by the sublevel set \eqref{eq:mimo-z-sublevel-set}.\qed

To connect the error of linearized states $\vcz$, to that of the physical states $\vcx$, we leverage Assumption~\ref{assm:lipschitz-T}, which states that the inverse of the diffeomorphism ${\hat\Phi}^{-1}$ is well-conditioned. This leads to the following result.

\begin{cob}{}
    \begin{lemma}\label{lem:x-error}
    Under the assumptions of Lemma~\ref{lem:z-error} and Assumption~\ref{assm:lipschitz-T}, the $\vcx$ tracking error $\vce_x := \vcx - \bar\vcx$ is uniformly ultimately bounded by the set
    \begin{equation*}
        \Omega_x = \left\{\vce \mid \norm{\vce_x}^2 \leq \frac{\kappa(\mtP)\bar d^2}{l_{\hat\Phi}^2\lmin(\mtQ)\left( \frac{1}{\gamma^2} - \frac{\bar\delta^2}{\lmin(\mtR)} \right)}\right\},
    \end{equation*}
    where $\kappa(\cdot)$ denotes the condition number of the given matrix.
    \end{lemma}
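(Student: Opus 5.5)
The plan is to transfer the ultimate bound of Lemma~\ref{lem:z-error} from the linearized coordinates back to the physical state via the lower-Lipschitz property in Assumption~\ref{assm:lipschitz-T}. The $\vcz$-tracking error and the $\vcx$-tracking error are related by $\vce = \hat\Phi(\vcx) - \hat\Phi(\bar\vcx)$, because $\bar\vcz = \hat\Phi(\bar\vcx)$. Assumption~\ref{assm:lipschitz-T} then gives the pointwise inequality
\begin{equation*}
    l_{\hat\Phi}^2\norm{\vce_x}^2 \leq \norm{\hat\Phi(\vcx) - \hat\Phi(\bar\vcx)}^2 = \norm{\vce}^2,
\end{equation*}
valid whenever $\vcx,\bar\vcx\in\stD$. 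This holds by Assumption~\ref{assm:mimo-bounded-reference} and the hypothesis, carried over from Lemma~\ref{lem:z-error}, that $\vcx(t)\in\stD$.

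Next, I convert the ellipsoidal set $\Omega_z$ into a Euclidean ball. By Lemma~\ref{lem:z-error}, after some finite time (and for all subsequent times) $\vce^\top\mtP\vce \leq c$, where $c := \lmax(\mtP)\bar d^2/(\lmin(\mtQ)\eta)$ and $\eta = \frac{1}{\gamma^2}-\frac{\bar\delta^2}{\lmin(\mtR)}$. Since $\lmin(\mtP)\norm{\vce}^2 \leq \vce^\top\mtP\vce$, this yields
\begin{equation*}
    \norm{\vce}^2 \leq \frac{\lmax(\mtP)}{\lmin(\mtP)}\cdot\frac{\bar d^2}{\lmin(\mtQ)\eta} = \frac{\kappa(\mtP)\bar d^2}{\lmin(\mtQ)\eta}.
\end{equation*}
Combining this with the Lipschitz inequality gives $\norm{\vce_x}^2 \leq \kappa(\mtP)\bar d^2/(l_{\hat\Phi}^2\lmin(\mtQ)\eta)$, which is exactly the bound defining $\Omega_x$. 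The ultimate boundedness (finite entry time and forward invariance) transfers directly, because the implication $\vce\in\Omega_z \Rightarrow \vce_x\in\Omega_x$ holds at each time.

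The main subtlety lies in the constant inside $\Omega_z$. I would check whether $\lmax(\mtP)$ in the numerator of \eqref{eq:mimo-z-sublevel-set} already accounts for the conversion from the Euclidean ball $\{\norm{\vce}^2\le \bar d^2/(\lmin(\mtQ)\eta)\}$, where $\dot V\le 0$ can fail, to a $V$-sublevel set. If so, dividing by $\lmin(\mtP)$ produces exactly $\kappa(\mtP)$, as written.

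A second point is the factor $\frac12$ in $V$, which might introduce a harmless factor of $2$. I would track it consistently with the convention of Lemma~\ref{lem:z-error} and take the set as stated there.

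Finally, if a gap appears in the claim that the trajectory stays in $\stD$, I would note that the lemma inherits this hypothesis from Lemma~\ref{lem:z-error} rather than proving it.
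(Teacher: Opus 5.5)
Your proposal is correct and follows essentially the same route as the paper. The paper also applies the lower-Lipschitz bound of Assumption~\ref{assm:lipschitz-T} to pass from $\norm{\vce_x}$ to $\norm{\hat\Phi(\vcx)-\hat\Phi(\bar\vcx)}$, then uses $\lmin(\mtP)\norm{\vce}^2\le\vce^\top\mtP\vce$ together with the $\Omega_z$ constant to obtain $\kappa(\mtP)$, with the same entry time $T$ carrying over. Your side checks also come out consistent: the $\lmax(\mtP)$ in $\Omega_z$ is precisely the ball-to-ellipsoid conversion, the $\tfrac12$ in $V$ cancels, and $\vcx(t)\in\stD$ is indeed an inherited hypothesis.
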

\end{cob}
\noindent \textit{Proof.} From Lemma~\ref{lem:z-error}, there exists a time $T$, independent of starting time $t_0$, such that for $t \geq T$,
\begin{equation*}
    (\vcz - \bar\vcz)^\top \mtP (\vcz - \bar\vcz) \leq \frac{\lmax(\mtP)\bar d^2}{\lmin(\mtQ)\left( \frac{1}{\gamma^2} - \frac{\bar\delta^2}{\lmin(\mtR)} \right)}.
\end{equation*}
Thus, for the same time $T$, we have that
\begin{equation*}
\begin{aligned}
    \norm{\vce_x}^2
    &= \norm{\vcx - \bar\vcx}^2 \\
    &\leq \frac{1}{l_{\hat\Phi}^2} \norm{{\hat\Phi}(\vcx) - {\hat\Phi}(\bar\vcx)}^2 \\
    % &\leq \frac{1}{l_{\hat\Phi}^2 \lmin(P)} \lmin(P) \norm{\vcz - \bar\vcz}^2 \\
    &\leq \frac{1}{l_{\hat\Phi}^2 \lmin(\mtP)} \vce_z^\top \mtP \vce_z \\
    &\leq \frac{1}{l_{\hat\Phi}^2 \lmin(\mtP)} \frac{\lmax(\mtP)\bar d^2}{\lmin(\mtQ)\left( \frac{1}{\gamma^2} - \frac{\bar\delta^2}{\lmin(\mtR)} \right)}.
\end{aligned}
\end{equation*}
\qed

Finally, the above results explicitly assume that $\vcx$ does not leave the domain $\stD$, where the approximation quality is guaranteed. We thus leverage the forward invariant error set to robustify our assumption on our reference trajectory.
\begin{cob}{}
\begin{assumption}\label{assm:robust-reference}
    We assume that the reference trajectory states lie within the domain, i.e.,
    $$ \bar\vcx(t) \in \stD \ominus \Omega_x,\quad \forall t \geq 0,$$
    and that the reference inputs $\bar v$ is uniformly bounded:
    $$ \norm{\bar v(t)} \leq \bar V,\quad \forall t \geq 0.$$
    Here $\ominus$ denotes the Minkowski difference.
\end{assumption}
\end{cob}
\begin{theorem}\label{thm:tracking_appendix}
    Under assumptions~\ref{assm:mimo-bounded-residual}, \ref{assm:lipschitz-T}, and \ref{assm:robust-reference}, if $\bar\delta < \frac{\sqrt{\lmin(\mtR)}}{\gamma}$ and $\vce_x(0) \in \Omega_x$, the tracking error $\vce_x$ will be uniformly ultimately bounded by the sublevel set $\Omega_x$.
\end{theorem}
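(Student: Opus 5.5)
The argument is a continuity (barrier) argument that removes the standing hypothesis $\vcx(t)\in\stD$ used in Lemmas~\ref{lem:z-error} and \ref{lem:x-error}. Assumption~\ref{assm:robust-reference} places $\bar\vcx(t)$ in $\stD\ominus\Omega_x$. So whenever the physical error satisfies $\vce_x(t)\in\Omega_x$, we get $\vcx(t)=\bar\vcx(t)+\vce_x(t)\in\stD$, and on that time interval all the local bounds hold: Assumption~\ref{assm:mimo-bounded-residual}, the bound $\norm{\vcd}\le\bar d$, and the Lyapunov inequality from Lemma~\ref{lem:z-error}. The goal is to show that the trajectory can never exit $\Omega_x$, so that these bounds hold for all $t\ge 0$. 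Uniform ultimate boundedness then follows directly from Lemma~\ref{lem:x-error}.

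First I will make the invariance statement precise in $\vcz$-coordinates, because the Lyapunov function $V(\vce)=\frac12\vce^\top\mtP\vce$ lives there. Set $c^\star:=\lmax(\mtP)\bar d^2/(\lmin(\mtQ)\eta)$, so that $\Omega_z=\{\vce^\top\mtP\vce\le c^\star\}$. The estimate $\dot V\le-\frac12\lmin(\mtQ)\norm{\vce}^2+\bar d^2/(2\eta)$ makes $\dot V<0$ whenever $\norm{\vce}^2>\bar d^2/(\lmin(\mtQ)\eta)$. On the boundary of $\Omega_z$ we have $\norm{\vce}^2\ge c^\star/\lmax(\mtP)=\bar d^2/(\lmin(\mtQ)\eta)$, so $\dot V\le 0$ there, and in fact $\dot V<0$ when $\bar d>0$. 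Hence $\Omega_z$ is forward invariant as long as $\vcx$ remains in $\stD$. The proof of Lemma~\ref{lem:x-error} shows that $\vce_z\in\Omega_z$ implies $\vce_x\in\Omega_x$, using $\norm{\vce_x}\le\norm{\vce_z}/l_{\hat\Phi}$.

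Next comes the contradiction argument. Let $t^\ast=\sup\{t: \vcx(s)\in\stD \text{ for all } s\in[0,t]\}$ and suppose $t^\ast<\infty$. On $[0,t^\ast)$ the local bounds apply, so $\vce_z$ stays in $\Omega_z$ and hence $\vce_x$ stays in $\Omega_x$. By continuity of the closed-loop trajectory, $\vce_x(t^\ast)\in\Omega_x$ as well, so $\vcx(t^\ast)\in\bar\vcx(t^\ast)\oplus\Omega_x\subseteq\stD$. Local existence on the open domain then extends membership in $\stD$ slightly beyond $t^\ast$. To make this step strict, I use the "strictly interior" part of the reference assumption, or equivalently the strict decrease of $V$ on $\partial\Omega_z$.

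The main obstacle is a mismatch in the initialization hypothesis. The theorem assumes $\vce_x(0)\in\Omega_x$, but the invariance argument needs $\vce_z(0)\in\Omega_z$. The map from $\vcx$-errors to $\vcz$-errors needs an upper Lipschitz constant of $\hat\Phi$, which is not given, and the reverse implication fails in general. I will handle this by starting the barrier argument from the sublevel set $\{V\le V(\vce_z(0))\}$, or the largest set between it and $\Omega_z$. To bound the corresponding $\vcx$-set, I will use a Lipschitz upper bound $L_{\hat\Phi}$ of the smooth map $\hat\Phi$ on compact $\stD$. If that is unacceptable, I will instead read the hypothesis as $\vce_z(0)\in\Omega_z$, which is what the chain of lemmas naturally delivers, and state the conclusion as forward invariance of the $\vcx$-image of $\Omega_z$, which is contained in $\Omega_x$.
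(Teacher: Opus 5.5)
Your route is the paper's route made explicit. The paper's proof is the single assertion that ``the state trajectory always lies within $\stD$'' followed by Lemma~\ref{lem:x-error}. Your first-exit argument at $t^\ast$ (closedness of $\Omega_x$, the Minkowski-difference margin, and openness of $\stD$ to continue past $t^\ast$) is exactly what that assertion needs. Your diagnosis of the initialization mismatch is correct, and it is a gap in the paper's proof as well, not only in yours. Lemma~\ref{lem:x-error} gives ultimate boundedness, not forward invariance of $\Omega_x$. From $\vce_x(0)\in\Omega_x$ you only know that $V$ is nonincreasing from $V(\vce_z(0))$, and $\vce_z(0)$ may lie outside $\Omega_z$. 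Then $\norm{\vce_x(t)}$ can transiently exceed the radius of $\Omega_x$, so $\vcx(t)$ can leave $\stD$ even though $\bar\vcx(t)\in\stD\ominus\Omega_x$. After that, Assumption~\ref{assm:mimo-bounded-residual} says nothing. With the hypothesis $\vce_x(0)\in\Omega_x$, the theorem is therefore not justified by the listed assumptions.

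Of your two repairs, only the second works without changing other assumptions. Put $\rho^2:=\bar d^2/(\lmin(\mtQ)\eta)$ with $\eta=\gamma^{-2}-\bar\delta^2/\lmin(\mtR)$, so that $\Omega_x$ has squared radius $r_x^2=\kappa(\mtP)\rho^2/l_{\hat\Phi}^2$. In fallback (a), invariance of $\{V\le V(\vce_z(0))\}$ combined with $\norm{\vce_z(0)}\le L_{\hat\Phi}\norm{\vce_x(0)}$ yields only
\[
\norm{\vce_x(t)}^2\le \kappa(\mtP)\,\frac{L_{\hat\Phi}^2}{l_{\hat\Phi}^2}\,r_x^2 .
\]
Since $\kappa(\mtP)L_{\hat\Phi}^2/l_{\hat\Phi}^2\ge 1$, this ball is in general not covered by the margin in Assumption~\ref{assm:robust-reference}. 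The step $\vcx(t^\ast)\in\bar\vcx(t^\ast)\oplus\Omega_x\subseteq\stD$ then breaks unless you enlarge that margin.

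If you want to keep an $\vcx$-space hypothesis, use $L_{\hat\Phi}$ to shrink the initial set rather than inflate the invariant one. The condition $\norm{\vce_x(0)}\le\rho/L_{\hat\Phi}$ gives $\vce_z(0)^\top\mtP\vce_z(0)\le\lmax(\mtP)\rho^2$, i.e.\ $\vce_z(0)\in\Omega_z$. Either that or your fallback (b) closes the barrier argument completely. It gives $\vce_x(t)\in\Omega_x$ for all $t\ge 0$, via forward invariance of the time-varying preimage of $\Omega_z$ (not of $\Omega_x$ itself).

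A small correction: your estimate gives only $\dot V\le 0$ at points of $\partial\Omega_z$ where $\norm{\vce}^2$ equals the threshold, not $\dot V<0$. Strictness is unnecessary, since $\dot V<0$ on the complement of $\Omega_z$ already yields invariance by the usual first-exit argument.
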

\noindent \textit{Proof.} Following the assumptions, the state trajectory always lies within $\stD$. The rest follows directly from Lemma~\ref{lem:x-error}. \qed

%% file: sections/A3-on-the-diffeomorphism.tex
\subsection{A note on the diffeomorphic property of the learned transform}

\paragraph{Exact feedback linearization.}
As already established in Section~\ref{sec:FBL}, the existence of an observable function $h(\vcx)$ that satisfies conditions \textbf{CA} (or, equivalently, \textbf{CB}) generates a coordinate transformation $\Phi(\vcx)$ that feedback linearizes the system by stacking the iterated Lie derivatives of $h$. While conditions \textbf{CA}, together with successive time derivatives of $h(\vcx)$ clearly demonstrate that the transformed dynamics form a chain of integrators, showing that $\Phi(\vcx)$ constitutes a valid coordinate transformation requires establishing the nonsingularity of its Jacobian.

Following Lemma~4.1.1 in~\cite{isidori1985nonlinear}, consider the matrix-valued function $\mtM:\R^{n}\to\R^{n\times n}$ defined as
\begin{equation}
\mtM
=
D_\vcx\Phi\,
\begin{bmatrix}
g & \adfg{} & \hdots & \adfg{n-1}
\end{bmatrix}.
\label{eq:M_definition}
\end{equation}
Under the standing assumptions for state feedback linearizability, the bracket matrix
\begin{equation}
\mtG(\vcx)
=
\begin{bmatrix}
g & \adfg{} & \hdots & \adfg{n-1}
\end{bmatrix}
\end{equation}
is nonsingular on $\mathcal{D}$. Therefore, nonsingularity of $\mtM$ implies nonsingularity of $D_\vcx\Phi=\mtM\mtG^{-1}$.

Invoking the basic algebraic properties of Lie brackets, $\mtM$ can be constructed column by column through the recursive relation (see Lemma~4.1.2 in~\cite{isidori1985nonlinear})
\begin{equation}
\mtM_{i,j+1}
=
\lie_f\mtM_{i,j}
-
\mtM_{i+1,j},
\qquad
i=1,\ldots,n-1.
\label{eq:M_exact_recursion}
\end{equation}
In addition, by construction of the feedback-linearizing coordinates,
\begin{equation}
\phi_i=\lie_f^{\,i-1}h,
\end{equation}
and the relative-degree-$n$ condition gives
\begin{equation}
\lie_g h
=
\lie_g\lie_f h
=
\cdots
=
\lie_g\lie_f^{n-2}h
=
0,
\qquad
\beta
:=
\lie_g\lie_f^{n-1}h
\neq 0.
\end{equation}
Consequently, the first column of $\mtM$ is
\begin{equation}
\mtM_{:,0}
=
\begin{bmatrix}
0 & 0 & \hdots & 0 & \beta
\end{bmatrix}^{\top}.
\label{eq:M_initial_exact}
\end{equation}
Propagating~\eqref{eq:M_initial_exact} through~\eqref{eq:M_exact_recursion} yields the lower anti-triangular structure
\begin{equation}
\mtM =
\begin{bmatrix}
0      & 0      & 0      & \cdots & 0      & (-1)^{n-1}\beta \\
0      & 0      & 0      & \cdots & (-1)^{n-2}\beta & * \\
0      & 0      & 0      & \iddots & *      & * \\
\vdots & \vdots & \iddots & \iddots & \vdots & \vdots \\
0      & -\beta & *      & \cdots & *      & * \\
\beta  & *      & *      & \cdots & *      & *
\end{bmatrix}.
\label{eq:M_exact_structure}
\end{equation}
The anti-diagonal entries are therefore nonzero whenever $\beta\neq0$, which is sufficient to establish the nonsingularity of $\mtM$ and, consequently, of $D_\vcx\Phi$.

\paragraph{Approximate control conditions.}
The previous argument relies on exact satisfaction of both the control conditions and the Lie-derivative cascade defining $\Phi$. Nevertheless, the same construction provides a useful lens through which to understand how approximation errors affect the diffeomorphic property. We first consider the simpler case in which the Lie-derivative cascade remains exact, while the control conditions are satisfied only approximately. Let the learned coordinates be denoted by $\hat\phi_i$, $i=1,\ldots,n$, and assume
\begin{equation}
\hat\phi_{i+1}
=
\lie_f\hat\phi_i,
\qquad
i=1,\ldots,n-1.
\label{eq:exact_learned_cascade}
\end{equation}
The deviations from the control conditions can then be collected in the first column of $\hat{\mtM}$ by defining
\begin{equation}
\epsilon_i
=
\lie_g\hat\phi_i,
\qquad
i=1,\ldots,n-1,
\qquad
\epsilon_n
=
\lie_g\hat\phi_n-\beta.
\label{eq:control_residuals}
\end{equation}
Consequently,
\begin{equation}
\hat{\mtM}_{:,0}
=
\begin{bmatrix}
\epsilon_1 &
\epsilon_2 &
\hdots &
\epsilon_{n-1} &
\beta+\epsilon_n
\end{bmatrix}^{\top}.
\label{eq:M_initial_perturbed}
\end{equation}

Since the Lie-derivative cascade is still satisfied exactly, the same recursion as in the exact case holds,
\begin{equation}
\hat{\mtM}_{i,j+1}
=
\lie_f\hat{\mtM}_{i,j}
-
\hat{\mtM}_{i+1,j}.
\end{equation}
Let $\Delta\mtM^{g}$ denote the component of the perturbation generated by the control residuals in the recursively determined portion of $\mtM$. Its propagation is governed by
\begin{equation}
\Delta\mtM^{g}_{i,j+1}
=
\lie_f\Delta\mtM^{g}_{i,j}
-
\Delta\mtM^{g}_{i+1,j},
\qquad
\Delta\mtM^{g}_{i,0}
=
\epsilon_i.
\label{eq:control_error_recursion}
\end{equation}
The first few propagated residuals are therefore
\begin{align}
\Delta\mtM^{g}_{i,1}
&=
\lie_f\epsilon_i-\epsilon_{i+1},
\\
\Delta\mtM^{g}_{i,2}
&=
\lie_f^2\epsilon_i
-
2\lie_f\epsilon_{i+1}
+
\epsilon_{i+2},
\end{align}
and, more generally,
\begin{equation}
\Delta\mtM^{g}_{i,j}
=
\sum_{k=0}^{j}
(-1)^k
{j\choose k}
\lie_f^{\,j-k}\epsilon_{i+k},
\qquad
i+j\le n.
\label{eq:control_error_closed_form}
\end{equation}
Hence, the exact zero structure of $\mtM$ is no longer preserved: the control residuals appearing in the first column propagate toward the upper anti-triangular region and the anti-diagonal of $\mtM$. Importantly, uniformly small values of $\epsilon_i$ do not directly imply a uniformly small perturbation of $\mtM$, since the propagated entries depend on increasingly high Lie derivatives of the residual functions.

\paragraph{Approximate Lie-derivative consistency.}
A second source of approximation arises when the learned coordinates do not satisfy the exact Lie-derivative structure, as is the case in our cascadic implementation. Define the corresponding consistency residuals as
\begin{equation}
c_i
=
\lie_f\hat\phi_i-\hat\phi_{i+1},
\qquad
i=1,\ldots,n-1,
\label{eq:consistency_error}
\end{equation}
so that
\begin{equation}
\lie_f\hat\phi_i
=
\hat\phi_{i+1}+c_i.
\end{equation}
Letting
\begin{equation}
V_j=\adfg{j},
\end{equation}
the entries of the learned matrix are
\begin{equation}
\hat{\mtM}_{i,j}
=
\lie_{V_j}\hat\phi_i.
\end{equation}
Using $V_{j+1}=[f,V_j]$ and the commutator identity
\begin{equation}
\lie_{[f,V_j]}
=
\lie_f\lie_{V_j}
-
\lie_{V_j}\lie_f,
\end{equation}
the recursive construction becomes
\begin{equation}
\hat{\mtM}_{i,j+1}
=
\lie_f\hat{\mtM}_{i,j}
-
\hat{\mtM}_{i+1,j}
-
\lie_{\adfg{j}}c_i.
\label{eq:full_error_recursion}
\end{equation}
Thus, while the control residuals modify the initial condition of the recursion through the first column, the consistency residuals act as an additional forcing term injected at each stage of the recursive construction.

Subtracting the exact recursion~\eqref{eq:M_exact_recursion} gives the perturbation dynamics
\begin{equation}
\Delta\mtM_{i,j+1}
=
\lie_f\Delta\mtM_{i,j}
-
\Delta\mtM_{i+1,j}
-
\lie_{\adfg{j}}c_i,
\qquad
\Delta\mtM_{i,0}
=
\epsilon_i.
\label{eq:full_perturbation_recursion}
\end{equation}
Accordingly, the total perturbation can be interpreted as the superposition of a \emph{free response}, generated by the control residuals $\epsilon_i$ through the initial condition, and a \emph{forced response}, generated by the consistency residuals $c_i$. Introducing the propagation operator
\begin{equation}
(\mathcal{T}z)_i
=
\lie_f z_i-z_{i+1},
\end{equation}
and the forcing
\begin{equation}
(q_j)_i
=
\lie_{\adfg{j}}c_i,
\end{equation}
the perturbation in the $j$-th column can be written compactly as
\begin{equation}
\Delta\mtM_j
=
\mathcal{T}^{j}\epsilon
-
\sum_{s=0}^{j-1}
\mathcal{T}^{\,j-1-s}q_s.
\label{eq:free_forced_response}
\end{equation}
Due to the linearity of the recursive relation, this decomposition is directly analogous to the decomposition of a linear discrete system into its zero-input and zero-state responses. Both contributions involve derivatives of the learned residual functions and may therefore differ spatially from the residual values directly penalized during training. Importantly, however, the consistency contribution does not introduce an \emph{a priori} higher differential order than the control contribution: at column $j$, both involve derivatives of their respective residuals of order at most $j$, with consistency errors introduced at later stages (which empirically appear larger) undergoing fewer subsequent propagation steps.

\paragraph{Structured perturbation and preservation of invertibility.}
To study how the resulting perturbation affects invertibility, it is convenient to reverse the column ordering of $\mtM$ through a fixed permutation matrix $\mtP$. The exact matrix
\begin{equation}
    \mtL^\star = \mtM\mtP
\end{equation}
is lower triangular with nonzero diagonal and is therefore invertible. The learned matrix can be decomposed as
\begin{equation}
    \hat{\mtM}\mtP
    =
    \mtL^\star
    +
    \Delta\mtL
    +
    \mtU,
    \label{eq:triangular_perturbation_decomposition}
\end{equation}
where $\Delta\mtL$ contains the perturbation in the strictly lower-triangular region, while $\mtU$ contains the complementary upper-triangular perturbation, including perturbations of the diagonal. This decomposition is applied to the total perturbation, independently of whether its entries originate from control or consistency errors.

Factoring out the exact matrix gives
\begin{equation}
    \hat{\mtM}\mtP
    =
    \mtL^\star
    \left(
    \mtI+\mtA+\mtB
    \right),
    \label{eq:AB_factorization}
\end{equation}
where
\begin{equation}
    \mtA
    =
    (\mtL^\star)^{-1}\Delta\mtL,
    \qquad
    \mtB
    =
    (\mtL^\star)^{-1}\mtU.
\end{equation}
Since $(\mtL^\star)^{-1}$ is lower triangular and $\Delta\mtL$ is strictly lower triangular, $\mtA$ is itself strictly lower triangular and hence nilpotent,
\begin{equation}
    \mtA^n=0.
    \label{eq:A_nilpotent}
\end{equation}
Consequently, $\mtI+\mtA$ is always invertible and
\begin{equation}
    (\mtI+\mtA)^{-1}
    =
    \sum_{k=0}^{n-1}(-1)^k\mtA^k .
    \label{eq:nilpotent_inverse}
\end{equation}
In particular, perturbations confined to the strictly lower-triangular region cannot, by themselves, destroy invertibility.

Using~\eqref{eq:nilpotent_inverse},~\eqref{eq:AB_factorization} can be factored further as
\begin{equation}
    \hat{\mtM}\mtP
    =
    \mtL^\star
    (\mtI+\mtA)
    \left(
    \mtI+\mtC
    \right),
    \label{eq:final_factorization}
\end{equation}
where
\begin{equation}
    \mtC
    :=
    (\mtI+\mtA)^{-1}\mtB
    =
    \sum_{k=0}^{n-1}(-1)^k\mtA^k\mtB .
    \label{eq:C_definition}
\end{equation}
Since the first two factors in~\eqref{eq:final_factorization} are invertible, preservation of invertibility reduces entirely to the last factor. In particular,
\begin{equation}
    \|\mtC\|_2<1
    \label{eq:C_condition}
\end{equation}
is sufficient to guarantee that $\mtI+\mtC$ is invertible and, consequently, that $\hat{\mtM}\mtP$ remains invertible.

The triangular structure provides a useful refinement of this condition. From~\eqref{eq:C_definition},
\begin{equation}
    \mtC
    =
    \mtB
    +
    \underbrace{
    \sum_{k=1}^{n-1}(-1)^k\mtA^k\mtB
    }_{\mtR},
    \label{eq:C_remainder}
\end{equation}
where the interaction remainder $\mtR$ satisfies the exact bound
\begin{equation}
    \|\mtR\|_2
    \le
    \left(
    \sum_{k=1}^{n-1}\|\mtA\|_2^k
    \right)
    \|\mtB\|_2.
    \label{eq:R_bound}
\end{equation}
Hence,
\begin{equation}
    \|\mtC\|_2
    \le
    \left(
    \sum_{k=0}^{n-1}\|\mtA\|_2^k
    \right)
    \|\mtB\|_2,
    \label{eq:C_bound}
\end{equation}
and a sufficient condition for preservation of invertibility is therefore
\begin{equation}
    \boxed{
    \left(
    \sum_{k=0}^{n-1}\|\mtA\|_2^k
    \right)
    \|\mtB\|_2
    <1 .
    }
    \label{eq:invertibility_condition}
\end{equation}

The role of the triangular decomposition becomes particularly transparent by
normalizing the perturbation with respect to the smallest singular value of the
reference matrix. Let
\begin{equation}
    \mtE = \Delta\mtL + \mtU,
\end{equation}
where $\Delta\mtL$ and $\mtU$ denote, respectively, the strictly lower- and
upper-triangular components of the perturbation. Since their supports are
disjoint, they are orthogonal with respect to the Frobenius inner product, and
thus
\begin{equation}
    \|\mtE\|_F^2
    =
    \|\Delta\mtL\|_F^2
    +
    \|\mtU\|_F^2.
\end{equation}
In particular,
\begin{equation}
    \|\Delta\mtL\|_F \le \|\mtE\|_F,
    \qquad
    \|\mtU\|_F \le \|\mtE\|_F.
\end{equation}

Recalling
\begin{equation}
    \mtA=(\mtL^\star)^{-1}\Delta\mtL,
    \qquad
    \mtB=(\mtL^\star)^{-1}\mtU,
\end{equation}
and using
\begin{equation}
    \|(\mtL^\star)^{-1}\|_2
    =
    \frac{1}{\sigma_{\min}(\mtL^\star)},
    \qquad
    \|\mtX\|_2\le\|\mtX\|_F,
\end{equation}
we obtain
\begin{equation}
    \|\mtA\|_2
    \le
    \frac{\|\Delta\mtL\|_F}{\sigma_{\min}(\mtL^\star)}
    \le \eta,
    \qquad
    \|\mtB\|_2
    \le
    \frac{\|\mtU\|_F}{\sigma_{\min}(\mtL^\star)}
    \le \eta,
\end{equation}
where we define the dimensionless perturbation level
\begin{equation}
    \boxed{
    \eta
    :=
    \frac{\|\mtE\|_F}
         {\sigma_{\min}(\mtL^\star)}.
    }
\end{equation}

Consequently, for $\eta<1$, the remainder term in~\eqref{eq:R_bound}
satisfies
\begin{equation}
    \|\mtR\|_2
    \le
    \|\mtB\|_2
    \sum_{k=1}^{n-1}\|\mtA\|_2^k
    \le
    \eta\sum_{k=1}^{n-1}\eta^k
    =
    \eta^2
    \frac{1-\eta^{\,n-1}}{1-\eta}.
\end{equation}
Hence,
\begin{equation}
    \boxed{
    \|\mtR\|_2
    =
    \mathcal{O}(\eta^2)
    =
    \mathcal{O}\!\left(
    \frac{\|\mtE\|_F^2}
         {\sigma_{\min}^2(\mtL^\star)}
    \right).
    }
\end{equation}
Thus,
\begin{equation}
    \mtC
    =
    \mtB+\mathcal{O}(\eta^2).
    \label{eq:first_order_C}
\end{equation}
Therefore, the upper-triangular perturbation $\mtU$ constitutes the leading-order structure-breaking contribution, whereas the perturbation $\Delta\mtL$ in the lower-triangular region affects invertibility only through products with $\mtU$, and hence only at second and higher orders.

Most importantly, whenever~\eqref{eq:invertibility_condition} is satisfied,
\begin{equation}
    \hat{\mtM}\mtP
\end{equation}
is invertible. Since $\mtP$ is a permutation matrix, this immediately implies that $\hat{\mtM}$ is invertible. Finally, under the standing feedback-linearizability assumption that
\begin{equation}
    \mtG
    =
    \begin{bmatrix}
        g & \adfg{} & \hdots & \adfg{n-1}
    \end{bmatrix}
\end{equation}
is nonsingular,
\begin{equation}
    D_\vcx\hat\Phi
    =
    \hat{\mtM}\mtG^{-1}
\end{equation}
is also nonsingular. Hence, condition~\eqref{eq:invertibility_condition} provides a sufficient perturbation bound under which the learned transformation preserves the diffeomorphic property of the exact feedback-linearizing map.

%% file: sections/A4-synthesis.tex
\subsection{System synthesis details} \label{app:synthetic}
As already described in Section~\ref{subsec:synthetic}, we synthesize  exactly feedback linearizable systems via randomly generated diffeomorphisms building on a bidirectional affine coupling mechanism. After splitting the input $\vcx$ to $\vcx_1, \vcx_2$, the diffeomorphism is defined as:
\begin{align*}
\vcy_1 = \vcx_1 + t_1(\vcx_2)\\
\vcy_2 = \vcx_2 + t_2(\vcy_1)
\end{align*}
 and $t_1$, $t_2$ are parameterized as a linear combination of Gaussian kernel functions. In our experiments, we fix $s_1(\vcx)=s_2(\vcx)=\mathbf1$ and use Gaussian kernels to parameterize $t_1$ and $t_2$. Analytically, we define each transformation as $t_i(\vcx) := \mtW_i \phi_i(\vcx)$, where $\mtW_i \in \mathbb{R}^{d\times (n/2)}$ is a weight matrix and $\phi_i(\vcx) \in \mathbb{R}^d$ is a vector of Gaussian basis functions with components $\phi_{i,k}(\vcx) := \exp\!\left(-\tfrac{1}{2}(\vcx - \vcc_{i,k})^\top \mtSigma_{i,k}^{-1}(\vcx - \vcc_{i,k})\right)$ for $k=1,\dots,d$. Here, $\vcc_{i,k} \in \mathbb{R}^{n/2}$ are the kernel centers and $\mtSigma_{i,k} \in \mathbb{R}^{(n/2)\times(n/2)}$ are diagonal covariance matrices. Thus, \(t_1(\vcx)\) and \(t_2(\vcx)\) are linear combinations of Gaussian kernels. We generate diverse invertible transforms by sampling the nonzero parameters of \(t_i(\vcx)\) uniformly. The smoothness-and, consequently, learning task difficulty—is controlled by the upper bound on the diagonal covariance entries. The exact parameters used for the synthetic experiments of Section~\ref{subsec:synthetic}
 are summarized in Tables~\ref{tab:kernels1d}, \ref{tab:kernels2d} and \ref{tab:kernels8d}.
\begin{figure}
\centering
\begin{subfigure}{0.40\textwidth}
    \centering
    \includegraphics[width=\linewidth]{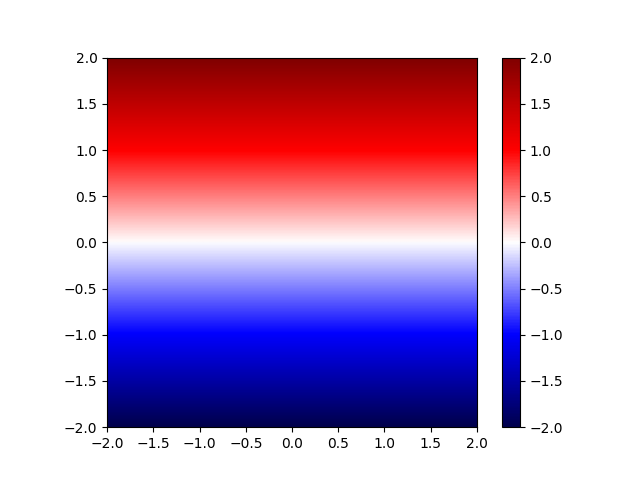}
    \caption{$h(\vcx)=\vcx_2$}
\end{subfigure}
\hfill
\begin{subfigure}{0.40\textwidth}
    \centering
    \includegraphics[width=\linewidth]{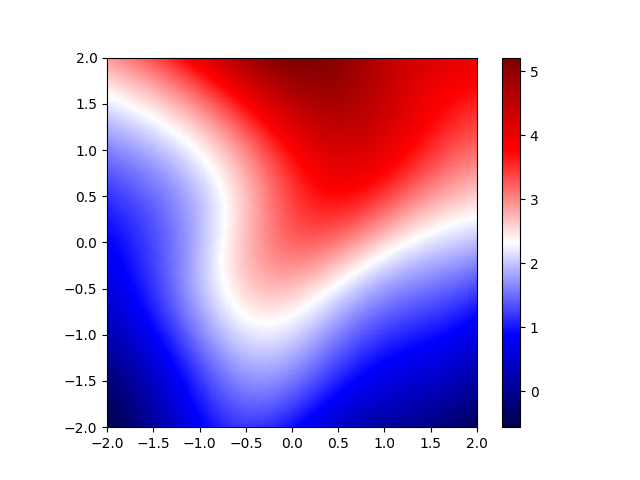}
    \caption{$\mtSigma_k \sim\mathcal{U}(0,1)$}
\end{subfigure}
\vspace{0.5em}
\begin{subfigure}{0.40\textwidth}
    \centering
    \includegraphics[width=\linewidth]{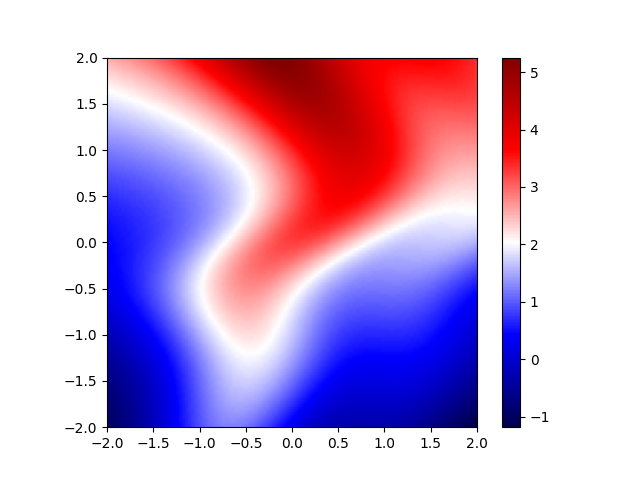}
    \caption{$\mtSigma_k \sim\mathcal{U}(0,2)$}
\end{subfigure}
\hfill
\begin{subfigure}{0.40\textwidth}
    \centering
    \includegraphics[width=\linewidth]{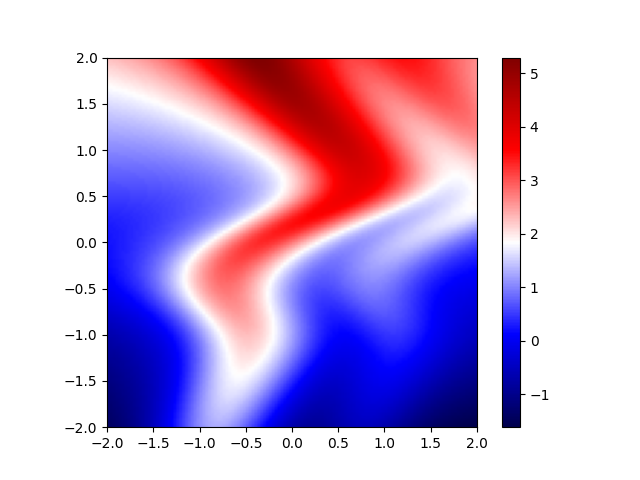}
    \caption{$\mtSigma_k \sim\mathcal{U}(0,4)$}
    \label{fig:tough-diffeo}
\end{subfigure}

\caption{A linear and representative nominal linearizing outputs for different scales of $\Sigma_k$ and fixed $W_k$, $c_k$ for the synthetic systems.}
\label{fig:synthetic-h}
\end{figure}

\begin{table}
\centering
\begin{tabular}{lcccc}
\toprule
 & Num. Kernels & $\Sigma_k$ & $c_k$ & $W_k$ \\
\midrule
$t_1$ & 32 & $\mathcal{U}(0,2)$ & $\mathcal{U}(-1,1)$ & $\mathcal{U}(-1,1)$ \\
$t_2$ & 32 & $\mathcal{U}(0,2)$ & $\mathcal{U}(-1,1)$ & $\mathcal{U}(-1,1)$ \\
\bottomrule
\end{tabular}
\caption{Kernel configuration for $t_1(x)$ and $t_2(x)$ in the synthetic 2D systems of the main paper.}
\label{tab:kernels1d}
\end{table}

\begin{table}
\centering
\begin{tabular}{lcccc}
\toprule
 & Num. Kernels & $\Sigma_k$ & $c_k$ & $W_k$ \\
\midrule
$t_1$ & 32 & $\mathcal{U}(0,2)$ & $\mathcal{U}(-1,1)$ & $\mathcal{U}(-1,1)$ \\
$t_2$ & 32 & $\mathcal{U}(0,1)$ & $\mathcal{U}(-0.8,0.8)$ & $\mathcal{U}(-1,1)$ \\
\bottomrule
\end{tabular}
\caption{Kernel configuration for $t_1(x)$ and $t_2(x)$ in the synthetic 4D systems of the main paper.}
\label{tab:kernels2d}
\end{table}

\begin{table}
\centering
\begin{tabular}{lcccc}
\toprule
 & Num. Kernels & $\Sigma_k$ & $c_k$ & $W_k$ \\
\midrule
$t_1$ & 16 & $\mathcal{U}(0,0.6)$ & $\mathcal{U}(-1,1)$ & $\mathcal{U}(-1,1)$ \\
$t_2$ & 16 & $\mathcal{U}(0,0.2)$ & $\mathcal{U}(-0.6,0.6)$ & $\mathcal{U}(-1,1)$ \\
\bottomrule
\end{tabular}
\caption{Kernel configuration for $t_1(x)$ and $t_2(x)$ in the synthetic 8D systems of the main paper.}
\label{tab:kernels8d}
\end{table}
\begin{figure}
    \centering
    \includegraphics[width=0.5\linewidth]{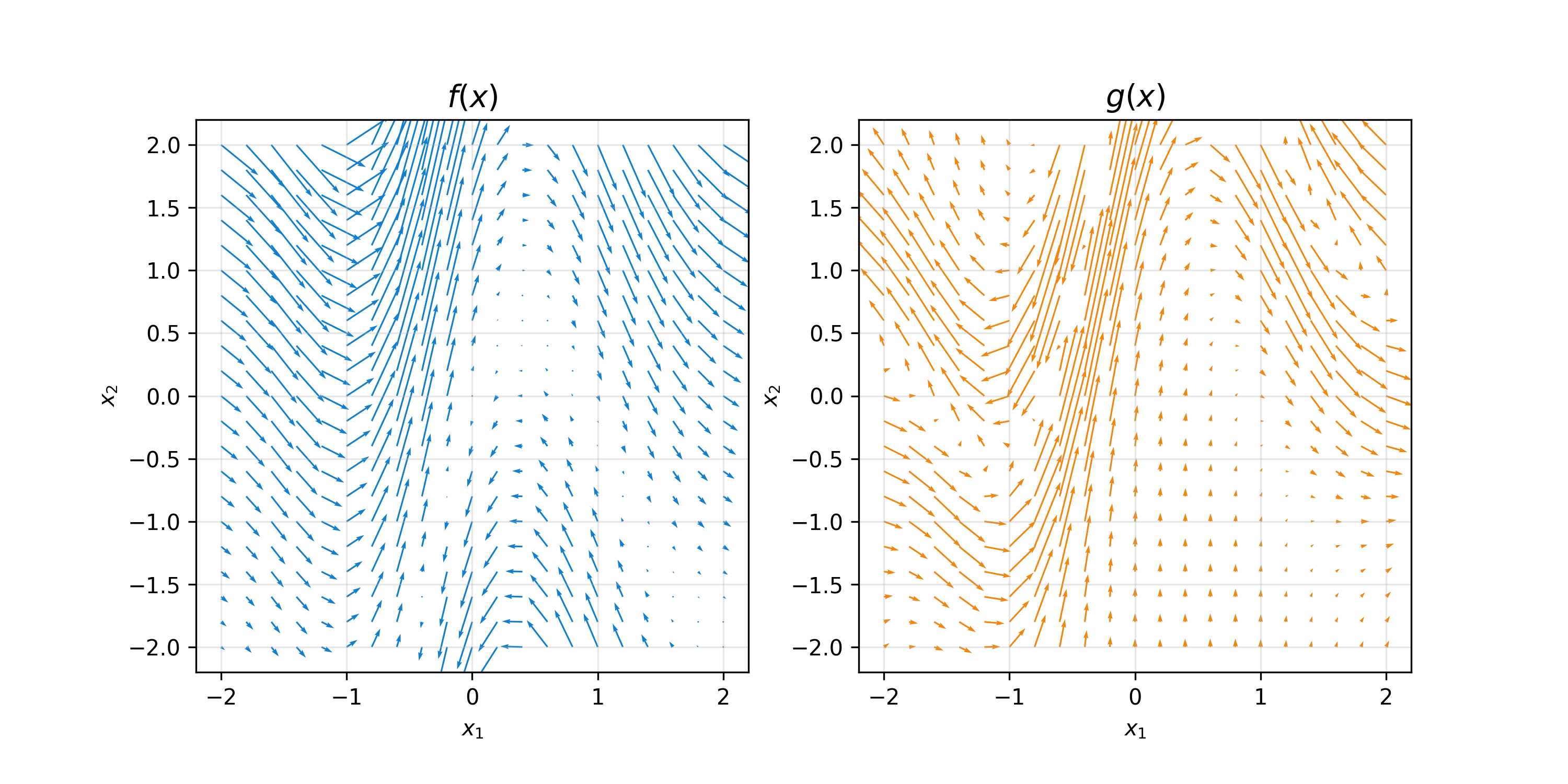}
    \caption{Sample vector fields $f$, $g$ of the 2D SISO synthetic systems generated from the configuration of Table~\ref{tab:kernels1d}.}
    \label{fig:synthetic-sys-viz}
\end{figure}
To demonstrate the applicability of the proposed method to control affine systems that are characterized by highly non-regular vector fields $f,g$ and are feedback linearized from less smooth diffeomorphisms as the one in Figure~\ref{fig:tough-diffeo}, we additionally provide numerical results for the configuration in Table~\ref{tab:kernels1dtough}.
\begin{table}
\centering
\begin{tabular}{lcccc}
\toprule
 & Num. Kernels & $\Sigma_k$ & $c_k$ & $W_k$ \\
\midrule
$t_1$ & 32 & $\mathcal{U}(0,4)$ & $\mathcal{U}(-1,1)$ & $\mathcal{U}(-1,1)$ \\
$t_2$ & 32 & $\mathcal{U}(0,4)$ & $\mathcal{U}(-1,1)$ & $\mathcal{U}(-1,1)$ \\
\bottomrule
\end{tabular}
\caption{Kernel configuration for $t_1(x)$ and $t_2(x)$ in the synthetic non-smooth 2D systems.}
\label{tab:kernels1dtough}
\end{table}
Representative vector fields are presented in Figure~\ref{fig:synthetic-sys-tough-viz}, and, following the evaluation methodology of the main paper, we provide the numerical results in Table~\ref{tab:2d-tough}. We observe that, even in this more challenging setup, only one out of ten systems exhibits unstable behavior, despite the absence of dedicated hyperparameter tuning during training.
\begin{figure}
    \centering
    \includegraphics[width=0.5\linewidth]{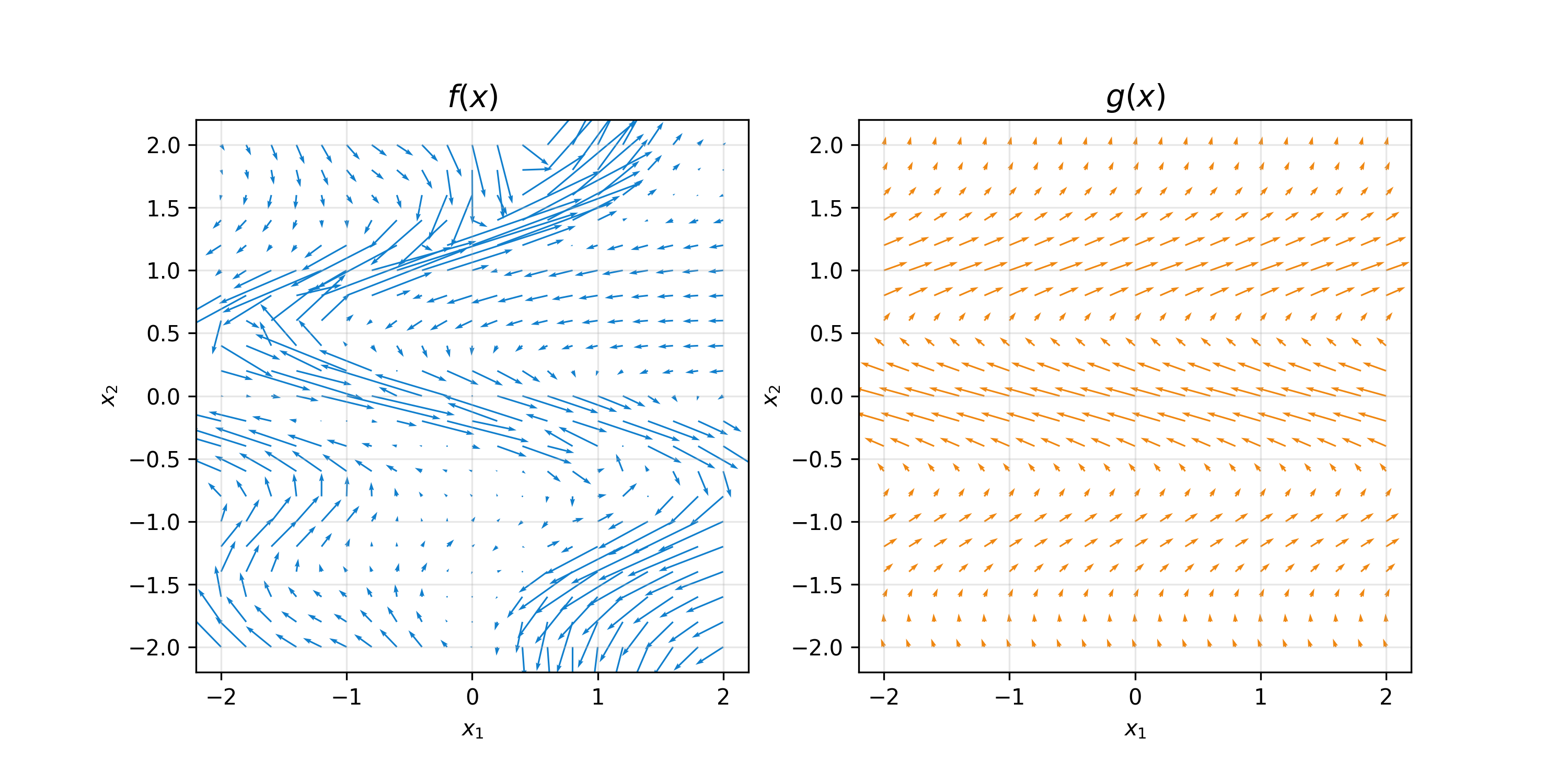}
    \caption{Sample vector fields $f$, $g$ of the 2D SISO synthetic systems generated from the configuration of Table~\ref{tab:kernels1dtough}.}
    \label{fig:synthetic-sys-tough-viz}
\end{figure}
\begin{table}[h]
\centering
\begin{tabular}{lcc}
\toprule
\textbf{Metric} & \textbf{Ground Truth} &\textbf{ Learned}\\
\midrule
RLQE& --- & $\mathbf{2.86 \times 10^{-4}}$ \\
\midrule
Avg. error & 0.1177& 0.1178 \\
%Avg. $z$-error & 0.0464 & 0.0472 \\
Final error & 0.0002 & 0.0002 \\
%Final $z$-error & 0.0001 & 0.0001 \\
Control effort & 42.70 & 42.74 \\
\midrule
Diverged (any) & --- & 0.1000 \\
Perc. diverged & --- & 0.00156 \\
\bottomrule
\end{tabular}
\caption{Tracking performance in 2D SISO systems generated by from the configuration of Table~\ref{tab:kernels1dtough}.}
\label{tab:2d-tough}
\end{table}